\newcommand{\raw}{\rightarrow}
\newcommand{\la}{\langle} \newcommand{\ra}{\rangle}
\newcommand{\x}{\times} 
 \newcommand{\ran}{{\rm ran}}
\newcommand{\half}{\mbox{\footnotesize $\frac{1}{2}$}}
\newcommand{\quar}{\mbox{\footnotesize $\frac{1}{4}$}}
\newcommand{\inv}{^{-1}}
\newcommand{\er}{\eqref}
\newcommand{\dl}{\delta} \newcommand{\Dl}{\Delta}
 \newcommand{\varep}{\varepsilon}
\newcommand{\lm}{\lambda} \newcommand{\Lm}{\Lambda}
 \newcommand{\sg}{\sigma}
 \newcommand{\phv}{\varphi}
 \newcommand{\ps}{\psi} 
\newcommand{\om}{\omega} 
 \newcommand{\R}{{\mathbb R}}
 \newcommand{\Z}{{\mathbb Z}}
\newcommand{\ssb}{{\sc ssb}}
\theoremstyle{plain}	
\newtheorem{definition}{Definition}[section]
\newtheorem{lemma}[definition]{Lemma}
\newtheorem{theorem}[definition]{Theorem}
\begin{document}
\begin{center}{\Large \textbf{Quantum spin systems versus Schr\"{o}dinger operators:  \\ A case study in
spontaneous symmetry breaking
}}\end{center}

% TODO: write the author list here. Use initials + surname format.
% Separate subsequent authors by a comma, omit comma at the end of the list.
% Mark the corresponding author with a superscript *.
\begin{center}
C. J. F. van de Ven\textsuperscript{1},
G. C. Groenenboom\textsuperscript{2},
R. Reuvers\textsuperscript{3},
N. P. Landsman\textsuperscript{4*}
\end{center}

% TODO: write all affiliations here.
% Format: institute, city, country
\begin{center}
{\bf 1} Department of Mathematics, University of Trento, INFN-TIFPA, Trento, Italy. \\ Marie Sk\l odowska-Curie Fellow of the  Istituto Nazionale di Alta Matematica. \\ Email: \texttt{christiaan.vandeven@unitn.it}.
\\
{\bf 2}  Theoretical Chemistry, Institute for Molecules and Materials, Radboud University, Nijmegen, The Netherlands. Email:  \texttt{gerritg@theochem.ru.nl}.
\\
{\bf 3} Department of Applied Mathematical and Theoretical Physics, University of Cambridge, Cambridge, U.K. Email:  \texttt{r.reuvers@damtp.cam.ac.uk}.
\\
{\bf 4}   Institute for Mathematics, Astrophysics, and Particle Physics,
   Radboud University, Nijmegen, The Netherlands.  Email:  \texttt{landsman@math.ru.nl}. \emph{Corresponding author}.
\end{center}
\vspace{-5mm}
\section*{Abstract}
\vspace{-3 mm}
{\bf
Spontaneous symmetry breaking (\ssb)  is mathematically tied to some limit, but must physically  occur, approximately, \emph{before} the limit. Approximate \ssb\ has been independently understood for Schr\"{o}dinger operators with double well potential  in the \emph{classical limit} (Jona-Lasinio et al, 1981; Simon, 1985) and for quantum spin systems in the \emph{thermodynamic} limit  (Anderson, 1952; Tasaki, 2019). We relate these to each other in the context of the Curie--Weiss model, establishing a remarkable  relationship between this model (for finite $N$) and a discretized  Schr\"{o}dinger operator with double well potential.
}
%\vspace{10pt}
\noindent\rule{\textwidth}{1pt}
%\thispagestyle{empty}
%\begin{small}
\tableofcontents
%\end{small}
\thispagestyle{fancy}
\noindent\rule{\textwidth}{1pt}
%\vspace{5pt}

\section{Introduction}
At first sight, spontaneous symmetry breaking (\ssb) is a paradoxical phenomenon: in \emph{Nature}, finite quantum systems, such as crystals, evidently display it, yet in \emph{Theory} it seems forbidden in such systems. Indeed, for finite quantum systems the ground state of a generic Hamiltonian is unique and hence invariant under whatever symmetry group $G$ it may have.\footnote{Similarly for equilibrium states at positive temperature, which are always (not just generically) unique.}  Hence  \ssb, in the sense of having a family of asymmetric ground states related by the action of $G$, seems possible only in infinite quantum systems or in classical systems (for both of which the arguments proving uniqueness, typically based on the Perron--Frobenius Theorem break down, cf.\  Appendix A). However, both are idealizations, vulnerable to what we call \emph{Earman's Principle} from the philosophy of physics: 
 \begin{quote}
``While idealizations are useful and, perhaps, even essential to progress in physics, a sound principle of interpretation would seem to be that no effect  can be counted as  a genuine physical effect if it disappears
when the idealizations are removed.'' (Earman, 2004)
\end{quote}
As argued in detail in Landsman (2013, 2017), the solution to his paradox lies in Earman's very principle itself, which (contrapositively)  implies what we call \emph{Butterfield's Principle}:
\begin{quote}
``there is a weaker,
yet still vivid, novel and robust behaviour that
occurs before we get to the limit, i.e. for finite $N$. And it is this weaker behaviour
which is physically real.''  (Butterfield, 2011)
\end{quote}
Applied to \ssb\ in infinite quantum systems, this means that some approximate and robust form of symmetry breaking should already occur in large but finite systems, \emph{despite the fact that uniqueness of the ground state seems to forbid this.} Similarly,  \ssb\ in a classical system should be foreshadowed in the quantum system whose classical limit it is, at least for tiny but positive values of Planck's constant $\hbar$. To accomplish this, it must be shown that for finite $N$ or $\hbar>0$ the system is not in its ground state, but in some other state having the property that as $N\raw\infty$ or $\hbar\raw 0$, it converges in a suitable sense (detailed in Landsman (2017), Chapter 8 and Chapter 7, respectively) to a symmetry-broken ground state of the limit system, which is either an infinite quantum system or a classical system. Since the symmetry of a state is preserved under the limits in question (provided these are taken correctly), this implies that \emph{the actual physical state at finite $N$ or $\hbar>0$ must already break the symmetry}. The mechanism to accomplish this, originating with Anderson (1952),
is based on forming symmetry-breaking linear combinations of low-lying states (sometimes called ``Anderson's tower of states") whose energy difference vanishes in the pertinent limit.\footnote{It must be admitted that the description in Anderson (1952) and even in his textbook Anderson (1984) is very brief and purely qualitative, and that in Anderson (2004), which he calls ``my most complete summary of the theory of broken symmetry in condensed matter systems" the idea is not even mentioned.} This mechanism
has been independently verified in two different contexts, namely
 quantum spin systems in the thermodynamic limit,\footnote{See e.g.\ the reviews Yannouleas \&  Landman (2007),  van Wezel \& van den Brink (2007), Birman \emph{et al} (2013), and Malomed (2013), as well as the original papers
 Horsch \& von der Linden (1988), Kaplan \emph{et al} (1989, 1990), Bernu \emph{et al}  (1992), Koma \& Tasaki (1994), 
 Shimizu \& Miyadera (2002),  van Wezel (2007, 2008, 2019), Shamriz  \emph{et al} (2013),   Papenbrock \& Weidenm\"{u}ller (2014), and, rigorously,  Tasaki (2019). \label{refsfootnote}} and
 $1d$ Schr\"{o}dinger operators with a symmetric double well potential in the classical limit.\footnote{Three founding papers are Jona-Lasinio \emph{et al} (1981), Graffi \emph{et al} (1984), and Simon (1985). Since in the context of Schr\"{o}dinger operators the classical limit ``$\hbar\raw 0$" typically means that $m\raw\infty$ at fixed
$\hbar$  (where $m$ is the mass occurring in $\hbar^2/2m$), one may physically see $\hbar\raw0$ as a special case of $N\raw\infty$.}
In the absence of cross-references so far, one of our contributions will be to map the specific perturbations that  play a key role in \ssb\ for Schr\"{o}dinger operators
  at $\hbar>0$ onto quantum spin systems. 
  
   To this end, we now 
 briefly recall the main point of Jona-Lasinio \emph{et al} (1981), later called the
 ``flea on the elephant''  (Simon, 1985) or, in applications to the measurement problem,  the ``flea on Schr\"{o}dinger's Cat'' (Landsman \& Reuvers, 2013). Consider the Schr\"{o}dinger operator with symmetric double well, defined on suitable domain in $\mathcal{H}=L^2(\R)$ by
\begin{equation}
h_{\hbar}=-\hbar^2\frac{d^2}{dx^2}+ \quar \lm (x^2-a^2)^2, \label{TheHam}
\end{equation}
where $\lm>0$ and $a\neq 0$. For any $\hbar>0$ the ground state of this Hamiltonian is unique and hence invariant under the $\Z_2$-symmetry $\psi(x)\mapsto\ps(-x)$; with an appropriate phase choice it is real, strictly positive, and doubly peaked above $x=\pm a$. Yet the classical Hamiltonian 
\begin{equation}
h_0(p,q)=p^2+ \quar \lm (q^2-a^2)^2, \label{TheHamc}
\end{equation}
defined on the classical phase space $\R^2$, has a two-fold degenerate ground state:  the point(s) $(p_0,q_0)$  in $\R^2$ where $h_0$ takes an absolute minimum are  $(p_0=0,q_0=\pm a)$. In the algebraic formalism, where states are defined as normalized positive linear functionals on the C*-algebra $A_0=C_0(\R^2)$, the (pure) ground states are the \emph{asymmetric} Dirac measures
 \begin{equation}
\om_{\pm}(f)=f(p=0,q=\pm a). \label{voorm}
\end{equation}
From these, one may construct the mixed \emph{symmetric} state 
\begin{equation}
\om_0=\half(\om_++\om_-), \label{ompomm}
\end{equation}
 which in fact is the limit of the (C*-algebraic) ground state $\om_{\hbar}$ of  \er{TheHam} as $\hbar\raw 0$, where 
\begin{equation}
\om_{\hbar}(a)=\la \ps_{\hbar}^{(0)}, a\ps_{\hbar}^{(0)}\ra,
\end{equation}
in terms of  the usual ground state $\ps_{\hbar}^{(0)}\in L^2(\R)$ of $h_{\hbar}$ (assumed to be a unit vector).\footnote{See Landsman, 2017, \S7.1. Here $\om_{\hbar}$ is
defined as a normalized positive linear functional on the C*-algebra $A_{\hbar}=B_0(L^2(\R))$ of compact operators on $L^2(\R)$.
The algebraic formalism is particularly useful for combining classical and quantum expressions.
} 
 In order to have a quantum ``ground-ish" state that converges 
to either one of the physical classical ground states $\om_+$ or $\om_-$ rather than to the unphysical mixture $\om_0$,
we perturb \er{TheHam} by adding an asymmetric  term $\dl V$ (i.e., the ``flea''), which, however small it is, under reasonable assumptions localizes the ground state  $\psi^{(\dl)}_{\hbar}$ of the perturbed Hamiltonian in such a way that $\om^{(\dl)}_{\hbar}\raw \om_+$ or $\om_-$, depending on the sign and location of $\dl V$.\footnote{\label{Bogo} For example,
if $\dl V$ is positive and is localized to the right, then the relative energy in the left-hand part of the double well is lowered, so that localization will be to the left. See \S\ref{sec4} for details.}
 In particular, the localization of $\psi^{(\dl)}_{\hbar}$ grows exponentially as $\hbar\raw 0$ (see \S\ref{sec4} for details).
In this paper, we  adapt this scenario to the \emph{Curie--Weiss model}  on a finite lattice $\Lm\subset\Z^d$, with Hamiltonian
  \begin{equation}
h^{\mathrm{CW}}_{\Lm} = -\half |\Lm|\inv\sum_{x,y\in\Lm} \sg_3(x)\sg_3(y)- B\sum_{x\in\Lm} \sg_1(x). \label{eq:curieweiss}
\end{equation}
Here we take the spin-spin coupling to be $J=1$, and $B$ is an external magnetic field.\footnote{\label{fn4} Note that putting $J=1$ makes 
$h^{\mathrm{CW}}_{\Lm}$ dimensionless.
This model  falls into the class of
 homogeneous mean-field theories, see e.g.\ Bona (1988), Raggio \& Werner (1989), and Duffield \& Werner (1992), which 
 differ from their short-range counterparts (which in this case would be the quantum Ising model) in that every spin now interacts with every other spin. This also makes the dimension $d$ irrelevant (which marks a huge difference with short-range quantum spin models), and yet even the apparently simple Curie--Weiss model is extremely rich in its behaviour; see e.g.\  Allahverdyana,  Balian, \&\  Nieuwenhuizen (2013) for a detailed analysis (along quite different lines from our study), motivated by the measurement problem.} 

This Hamiltonian has a  $\Z_2$-symmetry $(\sg_1,\sg_2,\sg_3)\mapsto (\sg_1,-\sg_2,-\sg_3)$,
 which at each site $x$ is implemented by $u(x)=\sg_1(x)$. The ground state of this model is unique for any $|\Lm|<\infty$ and any $B\neq 0$, and yet, as for the double well potential, in the thermodynamic limit it has two degenerate ground states, provided $0<B<1$. As  explained in Bona (1988), Raggio \& Werner (1989),  Duffield \& Werner (1992), and Landsman (2017, \S 10.8), perhaps unexpectedly  this limit actually defines a \emph{classical} theory, with phase space $B^3\subset\R^3$, i.e.\ the three-sphere with unit radius (and boundary $\partial B^3=S^2$), seen as a Poisson manifold with bracket $\{x,y\}=z$ etc.), and Hamiltonian
\begin{equation}
h_0^{\mathrm{CW}}(x,y,z)=-\half z^2-Bx. \label{hcwc}
\end{equation}
The ground states of this  Hamiltonian are simply its absolute minima, viz.\ ($\vec{x}=(x,y,z)$):
\begin{eqnarray}
\vec{x}_{\pm}&=& (B,0,\pm \sqrt{1-B^2})\:\:\: (0\leq B<1);\\
\vec{x}&=& (1,0,0) \:\:\: (B\geq 1),
\end{eqnarray}
which lie  on the boundary $S^2$ of $B^3$ (note that the points $\vec{x}_{\pm}$ coalesce  as $B\raw 1$, where they form a saddle point). Thus we seem to face a similar paradox as for the double well. 

To address this, in \S\ref{sec2} we first show that due to permutation invariance and strict positivity of the ground state of the $1d$ Curie--Weiss Hamiltonian (for $N<\infty$ and $|\Lambda|=N$), which is initially defined on the $2^N$-dimensional Hilbert space  $\mathcal{H}_N=\bigotimes_{n=1}^N\mathbb{C}^2$, the ground state of this Hamiltonian
must lie in the range $\ran(S_N)$ of the appropriate symmetriser 
\begin{align}
    S_N(v)=\frac{1}{N!}\sum_{\sigma\in S_n}L_{\sigma}(v) \label{defSN}
\end{align}
on $\mathcal{H}_N$; here 
 $v$ is a vector in the $N$-fold tensor product and $L_{\sigma}$ is given by permuting the factors of $v$, i.e.\  $v_1\otimes\cdot\cdot\cdot\cdot\otimes v_n\mapsto v_{\sigma(1)}\otimes\cdot\cdot\cdot\otimes v_{\sigma(n)}$. 
Its range is  $(N+1)$-dimensional, and we show that the quantum Curie--Weiss Hamiltonian restricted to  $\ran(S_N)$ becomes a \emph{tridiagonal} $(N+1)\x(N+1)$ matrix. Even for large $N$, this matrix is  easy to diagonalize numerically.
Using this tridiagonal structure, in \S\ref{sec3} we show that as $N\raw\infty$, our restricted Curie--Weiss Hamiltonian (rescaled by a factor $1/N$) increasingly well approximates  a $1d$ Schr\"{o}dinger operator with a symmetric double well potential defined on the interval $[0,1]$, in which $\hbar=1/N$. 
In \S\ref{sec4}  we use these ideas to find the counterpart of the ``flea'' perturbation (symmetry breaking field) from the double well potential for the Curie--Weiss model, which, analogously to the double well, localizes the ground state of the perturbed Hamiltonian (this time, of course, in spin configuration space rather than real space).
\smallskip

Although our physical mechanism for \ssb\ in finite systems is the same as the one studied in the condensed matter physics literature (namely Anderson's), the mathematical form of the flea perturbation is a bit different from the usual symmetry-breaking terms for quantum spin systems, which in the double well would correspond to breaking the symmetry by simply deepening one of the bottoms or changing its curvature, whereas the ``flea" is typically localized away from the minima, cf.\ Graffi \emph{et al} (1984). In the mathematical framework in which we work, our approach also has the advantage of making the limits $\hbar\raw 0$ and $N\raw\infty$ quite regular (i.e., continuous, properly understood), as opposed to the alternative view of regarding them as singular (e.g.\ Batterman, 2002; Berry, 2002; van Wezel \& van den Brink, 2007). This is
 further discussed in our Conclusion, 
which 
also states some open problems and suggestions for further research. This is followed by an appendix on the Perron--Frobenius Theorem, which plays a central role in our work, and another appendix introducing the discretization techniques we use to non-specialists. 
\section{Reduction of the Curie--Weiss Hamiltonian}\label{sec2}\setcounter{equation}{0}
Since the spatial dimension is irrelevant, we may as well consider the Curie--Weiss Hamiltonian 
\er{eq:curieweiss} in $d=1$, so that we may simply write $|\Lambda|=N$, and, with $h_N\equiv h^{\mathrm{CW}}_N$, 
  \begin{equation}
h_N = -\half N\inv\sum_{x,y=1}^N \sg_3(x)\sg_3(y)- B\sum_{x=1}^N \sg_1(x). \label{eq:curieweissN}
\end{equation}
It seems folklore that the Perron-Frobenius theorem yields uniqueness and strict positivity of the ground state $\psi_{N}^{(0)}$ of $h_N $ for any $N<\infty$ and $B\neq 0$, but for completeness we provide the details in  Appendix A. It follows that $\psi_{N}^{(0)}$ is $\mathbb{Z}_2$-invariant (see the Introduction), so that on a first analysis (to be corrected in what follows!) there is no \ssb\ for any finite $N$.
\subsection{Tridiagonal form}
Let $S_N$ be the standard symmetriser \eqref{defSN} on
the Hilbert space $\mathcal{H}_N=\bigotimes_{n=1}^N\mathbb{C}^2$ on which $h_N$ acts, so that $S_N$ projects onto the subspace $\ran(S_N)=\text{Sym}^N(\mathbb{C}^2)$  of totally symmetrised tensors. An orthonormal basis for $\text{Sym}^N(\mathbb{C}^2)$ is given by the vectors $\{|n_+,n_-\rangle | \ n_+=0,...,N, \ \ n_++n_-=N \}$, 
where $|n_+,n_-\rangle$ is the totally symmetrised unit vector in $\otimes_{n=1}^{N}\mathbb{C}^2$, with $n_+$ spins up and $n_-=N-n_+$ spins down. It follows that this space is  $(N+1)$-dimensional. Since $h_N$ commutes with all permutations of $\{1, \ldots, N\}$,  in view of its uniqueness the ground state $\psi_{N}^{(0)}$ of $h_N$ must  be invariant under the permutation group and hence under $S_N$.
 Hence we may expand $\psi_{N}^{(0)}$ according to
\begin{align} 
    \psi_{N}^{(0)}=\sum_{n_{+}=0}^{N}c(n_+/N)|n_+,n_-\rangle, \label{eq:basisexpansion}
\end{align}
 where we conveniently introduce a function $c:\{0,1/N,2/N,...,(N-1)/N,1\}\to [0,1]$ that satisfies
$c(n_+/N)=c(n_-/N)$ as well as $\sum_{n_+=0}^Nc^2(n_+/N)=1$.
\begin{theorem}\label{thm:tridiagonal}
 In the basis $\{|n_+\rangle\}\equiv\{|n_+,N-n_+\rangle\}$, the operator $\eqref{eq:curieweissN}$ is an $(N+1)\x(N+1)$ tridiagonal matrix:\footnote{The following relations can also be derived from those in Appendix 2 of van Wezel \emph{ et al} (2006).}
 \begin{align}
     &-\frac{1}{2N}(2n_+-N)^2 \ \ \text{on the diagonal};\label{diagonal contribution}\\
     &-B\sqrt{(N-n_+)(n_++1)} \ \ \text{on the upper diagonal};\\
     &-B\sqrt{(N-n_++1)n_+} \ \ \text{on the lower diagonal}.
 \end{align}
\end{theorem}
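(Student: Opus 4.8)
The plan is to compute the matrix elements $\langle m_+ | h_N | n_+\rangle$ directly in the symmetrised basis, treating the two terms of \eqref{eq:curieweissN} separately. The key reformulation is to rewrite $h_N$ in terms of collective spin operators. Set $S_i = \half\sum_{x=1}^N \sg_i(x)$ for $i=1,2,3$; these are the generators of an $\mathrm{SU}(2)$-action on $\CH_N$ that commutes with $S_N$, and $\ran(S_N)=\mathrm{Sym}^N(\C^2)$ is precisely the irreducible spin-$N/2$ representation. Then $\sum_{x,y=1}^N \sg_3(x)\sg_3(y) = 4S_3^2$ and $\sum_{x=1}^N \sg_1(x) = 2S_1$, so that on $\ran(S_N)$ we have $h_N = -\frac{2}{N} S_3^2 - 2B S_1$. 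The vectors $|n_+,n_-\rangle$ are the standard weight vectors: writing $j=N/2$ and $m = n_+ - N/2 = \half(n_+-n_-)$, one has $|n_+,n_-\rangle = |j,m\rangle$ with $S_3|j,m\rangle = m|j,m\rangle$.

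First I would verify the operator identities above. That $\sum_x \sg_1(x) = 2S_1$ is immediate; for the quadratic term, $\sum_{x,y}\sg_3(x)\sg_3(y) = \left(\sum_x \sg_3(x)\right)^2 = (2S_3)^2 = 4S_3^2$ since the $\sg_3(x)$ for distinct $x$ commute (this includes the diagonal $x=y$ terms, which contribute $\sum_x \sg_3(x)^2 = N$; one should be careful whether the author's sum includes $x=y$ — reading \eqref{eq:curieweissN} it does, and $4S_3^2$ correctly accounts for it). Next, the diagonal contribution: $-\frac{1}{2N}\langle j,m| 4S_3^2|j,m\rangle = -\frac{2}{N}m^2 = -\frac{2}{N}\left(n_+-\tfrac N2\right)^2 = -\frac{1}{2N}(2n_+-N)^2$, matching \eqref{diagonal contribution}. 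The field term $-2BS_1 = -B(S_+ + S_-)$ in terms of the raising/lowering operators $S_\pm = S_1 \pm i S_2$, which shifts $m$ by $\pm 1$, hence is purely off-diagonal; this is why the matrix is tridiagonal.

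The remaining task is to identify the off-diagonal entries with the correct normalisation. Using the standard formula $S_\pm |j,m\rangle = \sqrt{j(j+1)-m(m\pm 1)}\,|j,m\pm1\rangle$, substitute $j=N/2$, $m = n_+-N/2$. For the raising operator (which sends $|n_+\rangle$ to $|n_++1\rangle$) a short simplification gives $j(j+1)-m(m+1) = (N-n_+)(n_++1)$, so $\langle n_++1| (-BS_+)|n_+\rangle = -B\sqrt{(N-n_+)(n_++1)}$, which is the upper-diagonal entry; the lowering operator gives $j(j+1)-m(m-1) = (N-n_++1)n_+$, yielding the lower-diagonal entry. (The matrix is real symmetric, so upper and lower diagonals are consistent: replacing $n_+ \mapsto n_+-1$ in the upper-diagonal expression reproduces the lower-diagonal one.) One subtlety worth checking carefully is the precise identification $|n_+,n_-\rangle = |j,m\rangle$ up to phase — the symmetrised basis vectors as defined (via $S_N$ applied to a product state with $n_+$ up-spins, then normalised) do coincide with the Dicke states $|j,m\rangle$ with the conventional phases, so the Condon--Shortley formula applies directly; I would include one line confirming this normalisation constant ($\binom{N}{n_+}^{-1/2}$ times the symmetriser-sum) rather than leaving it implicit.

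I do not expect a genuine obstacle here: the proof is essentially the translation of the mean-field Hamiltonian into total-spin language plus the textbook action of $\mathfrak{su}(2)$ on its irreducible representations. The only place demanding care — hence the ``main obstacle'' in a purely bookkeeping sense — is matching conventions: whether the double sum in \eqref{eq:curieweissN} includes the diagonal $x=y$ (it does, and this only shifts $h_N$ by the constant $-\half$, which does not affect the stated matrix since that constant is absorbed — actually one should double-check the diagonal formula accounts for this, and indeed $-\frac{2}{N}m^2$ already is the full contribution of $-\frac{1}{2N}\cdot 4S_3^2$ with no leftover constant, because $4S_3^2$ is the \emph{exact} operator equal to the double sum, diagonal included), and pinning down the phase/normalisation of $|n_+,n_-\rangle$ so that the Clebsch--Gordan-type square roots come out exactly as written.
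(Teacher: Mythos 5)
Your proof is correct, but it follows a genuinely different route from the paper's. You pass to the collective spin operators $S_i=\tfrac12\sum_x\sigma_i(x)$, identify $\mathrm{Sym}^N(\mathbb{C}^2)$ with the spin-$N/2$ irreducible representation and $|n_+,n_-\rangle$ with the Dicke states $|j,m\rangle$ ($j=N/2$, $m=n_+-N/2$), and then read off the diagonal from $S_3^2$ and the off-diagonal entries from the standard ladder formula $S_\pm|j,m\rangle=\sqrt{j(j+1)-m(m\pm1)}\,|j,m\pm1\rangle$. The paper instead works entirely in the product basis: it gets the diagonal $(2n_+-N)^2$ by counting plus and minus signs among the $N^2$ terms $\sigma_3(x)\sigma_3(y)$, and gets the off-diagonal square roots by writing $|n_+\rangle$ explicitly as $\binom{N}{n_+}^{-1/2}\sum_l\beta_{n_+,l}$ and counting which permuted product vectors are connected by a single spin flip under $\sum_x\sigma_1(x)$. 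Your approach buys brevity and conceptual clarity (tridiagonality is immediate because $-B(S_++S_-)$ shifts $m$ by $\pm1$), at the price of the one point you correctly flag: one must check that the symmetrised, normalised basis vectors really are the weight vectors on which the Condon--Shortley formula holds with positive coefficients --- which is exactly the combinatorial computation the paper performs explicitly, so in a sense the paper's proof is the elementary verification underlying your appeal to $\mathfrak{su}(2)$ theory. Your handling of the $x=y$ terms (the unrestricted double sum equals $4S_3^2$ exactly, so no leftover constant) and of the upper/lower-diagonal bookkeeping via symmetry of the real matrix are both correct, and your values match \eqref{diagonal contribution}--(2.7) precisely.
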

\begin{proof}
Given two arbitrary vectors $|n_+\rangle$ and $|n_+'\rangle$, we have to compute the expression
\begin{align}
    \langle n_+|h_N^{\text{CW}}|n_+'\rangle, \ \ \ (n_+,n_+'=0,...,N);\label{eq:basisrepresentation}
\end{align}
where we we have used the bra-ket notation in the above expression.
By linearity, we may separately compute this for the operators
\begin{align}
&h_N^{(1)}=\sum_{x,y=1}^N\sigma_3(x)\sigma_3(y)=\sum_{x=1}^N\sigma_3(x)\cdot \sum_{y=1}^N\sigma_3(y), \nonumber\\
&h_N^{(2)}=\sum_{x=1}^N\sigma_1(x). \label{offdiagonal}
\end{align}
 In order to compute \eqref{eq:basisrepresentation}, we need to know how $\sigma_3$ and $\sigma_1$ act on the vectors $|n_+\rangle$. Consider  the standard basis $\{e_1,e_2\}$ for $\mathbb{C}^2$ over $\mathbb{C}$. Then $\{e_{n_1}\otimes...\otimes e_{n_N}\}_{n_1=1,...,n_N=1}^{2}$ is the standard basis for $\bigotimes_{n=1}^{N}\mathbb{C}^2$.
 Note that $\sigma_3(x)=1\otimes...\otimes1\otimes\sigma_3\otimes1...\otimes1$, where $\sigma_3$ acts on the $x^{th}$ position and similarly for $\sigma_1(x)$. It follows that  for all $x,y\in\{1,...,N\}$,
\begin{align}
    &\sigma_3(x)(e_{n_1}\otimes...\otimes e_{n_N})=\nonumber\\
    &1(e_{n_1})\otimes...\otimes 1(e_{n_{x-1}})\otimes\sigma_3(e_{n_x})\otimes 1(e_{n_{x+1}})\otimes...\otimes1(e_{n_N})=\nonumber\\
    &\begin{cases}
+(e_{n_1}\otimes...\otimes e_{n_N}),  \ \text{if} \ e_{n_x}=\begin{pmatrix}1\\0\end{pmatrix} \\
-(e_{n_1}\otimes...\otimes e_{n_N}), \  \text{if} \ e_{n_x}=\begin{pmatrix}0\\1\end{pmatrix}. \\
\end{cases}
\end{align}
We have $\sigma_3(y)\sigma_3(x)(e_{n_1}\otimes...\otimes e_{n_N})=\pm (e_{n_1}\otimes...\otimes e_{n_N})$, where the minus sign appears only if $e_{n_x}\neq e_{n_y}$. We conclude that the standard basis for the $N$-fold tensor product is a set of eigenvectors for $\sigma_3(y)\sigma_3(x)$ with eigenvalues $\pm1$. Thus we know that $\sum_{x,y}\sigma_3(x)\sigma_3(y)$ is a diagonal matrix with respect to this standard basis. Note that $|n_+\rangle$ is a (normalized) sum of permutations of such basis vectors, with $n_+$ times the vector $e_1$ and $N-n_+$ times the vector $e_2$. Since $\sum_{x,y}\sigma_3(y)\sigma_3(x)$ acts diagonally on any of these vectors, and it is also permutation invariant, it follows that in the inner product any vector with itself yields the same contribution, namely
\begin{align}
    \langle e_1\otimes\cdot\cdot\cdot\otimes e_1\otimes e_2\cdot\cdot\cdot\otimes e_2|\sum_{x,y}\sigma_3(y)\sigma_3(x)|e_1\otimes\cdot\cdot\cdot\otimes e_1\otimes e_2\cdot\cdot\cdot\otimes e_2\rangle,
\end{align}
where $e_1$ occurs $n_+$ times and $e_2$ occurs $N-n_+$ times. The above expression equals $$(n_+-(N-n_+))^2=(2n_+-N)^2,$$ since there are $2n_+(N-n_+)$ minus signs and hence $N^2-2n_+(N-n_+)=n_{+}^2+(N-n_+)^2$ plus signs, so that in total the correct value is indeed given by
\begin{align}
n_{+}^2+(N-n_+)^2-2n_+(N-n_+)=(2n_+-N)^2.
\end{align}
This shows that the contribution to the diagonal is given by \eqref{diagonal contribution}.
\\\\
In order to compute the off-diagonal contribution \eqref{eq:basisrepresentation} with \eqref{offdiagonal}, we use an explicit expression for the symmetric basis vector $|n_+\rangle$. Using $\eqref{defSN}$, it is easy to show that
\begin{align}
|n_+\rangle=\frac{1}{\sqrt{{{N}\choose{n_+}}}}\sum_{l=1}^{{{N}\choose{n_+}}}\beta_{n_+,l}, \label{def:symmetricbasisvectors}
\end{align}
where the subindex $l$ in $\beta_{n_+,l}$ labels the possible permutations of the factors in the basis vector $\beta_{n_+,l}$. Since there are ${N}\choose{n_+}$ such permutations, the subindex indeed goes from $1$ to  ${N}\choose{n_+}$. We fix $N$ and $n_+$, and put 
\begin{align}
    &W_{n_+}^1=\{y\in\{1,...,N\}|\ \text{$\beta_{n_+}$ has $e_1$ on position $y$ }\}, \ \ \text{and}\nonumber\\ &W_{n_+}^2=\{y\in\{1,...,N\}|\ \text{$\beta_{n_+}$ has $e_2$ on position $y$ }\}. 
\end{align}
Then 
\begin{align}
    \#W_{n_+}^1+\#W_{n_+}^2=n_++(N-n_+)=n_++n_-=N.
\end{align}
Both sets are clearly disjoint. Then we compute
\begin{align}
    &\frac{1}{\sqrt{{{N}\choose{n_+}}}}\frac{1}{\sqrt{{{N}\choose{n_+'}}}}\sum_{l=1}^{{{N}\choose{n_+}}}\sum_{k=1}^{{{N}\choose{n_+'}}}\langle\beta_{n_+,l}|h_N^{(2)}|\beta_{n_+',k} \rangle\nonumber=
     \frac{1}{\sqrt{{{N}\choose{n_+}}}}\frac{1}{\sqrt{{{N}\choose{n_+'}}}}\sum_{l=1}^{{{N}\choose{n_+}}}\sum_{k=1}^{{{N}\choose{n_+'}}}\langle\beta_{n_+,l}|\sum_{x=1}^N\sigma_1(x)|\beta_{n_+',k} \rangle\nonumber\\ 
     &=\frac{1}{\sqrt{{{N}\choose{n_+}}}}\frac{1}{\sqrt{{{N}\choose{n_+'}}}}\sum_{l=1}^{{{N}\choose{n_+}}}\sum_{k=1}^{{{N}\choose{n_+'}}}\langle\beta_{n_+,l}|\bigg{(}\sum_{x\in W_{n_+'}^1}+\sum_{x\in W_{n_+'}^2}\sigma_1(x)\bigg{)}|\beta_{n_+',k} \rangle\nonumber=\\
     &\frac{1}{\sqrt{{{N}\choose{n_+}}}}\frac{1}{\sqrt{{{N}\choose{n_+'}}}}\bigg{(}{{N}\choose{n_+}}(N-n_+)\langle\beta_{n_+,l}|\beta_{n_+'-1,k} \rangle+{{N}\choose{N-n_+}}n_+\langle\beta_{n_+,l}|\beta_{n_+'+1,k} \rangle\bigg{)}\nonumber=\\
     &\sqrt{(N-n_+)(n_++1)}\delta_{n_+,n_+'-1}+\sqrt{n_+(n_-+1)}\delta_{n_+,n_+'+1}.
\end{align}
We used the fact that the vectors $\beta_{n_+',l}$ are orthonormal, that
\begin{align}
    \frac{1}{\sqrt{{N}\choose{n_+}}}\frac{1}{\sqrt{{{N}\choose{n_+'}}}}{{N}\choose{n_+}}(N-n_+)=\sqrt{(N-n_+)(n_++1)}, 
\end{align} 
with $n_+'-1=n_+$, and that 
\begin{align}
   \frac{1}{\sqrt{{N}\choose{n_+}}}\frac{1}{\sqrt{{{N}\choose{n_+'}}}}{{N}\choose{N-n_+}}n_+=\sqrt{n_+(N-n_++1)},
\end{align}
with $n_+'+1=n_+$. Hence the matrix entries of $h_N^{(2)}$ written with respect to the symmetric basis vectors $|n_+,N-n_+\rangle$, are given by $\sqrt{(N-n_+)(n_++1)}$ on the upper diagonal and by $\sqrt{n_+(N-n_++1)}$ on the lower diagonal (see also van de Ven, 2018, \S 3.1).
\end{proof}
From now on we will denote the Curie--Weiss Hamiltonian \eqref{eq:curieweissN} represented in the symmetric basis  by $J_{N+1}$. 
\subsection{Numerical simulations}\label{sec:Numerical simulations}
 In the next section we will argue that for $0<B<1$ the above $(N+1)$-dimensional matrix, denoted by $J_{N+1}$, can be linked to a Schr\"{o}dinger operator with a symmetric double well on $L^2([0,1])$, for $N$ sufficiently large.\footnote{Mapping quantum spin systems onto  Schr\"{o}dinger operators is not  new, see e.g.\ Scharf \emph{et al} (1987) and Zaslavskii (1990).  Schr\"{o}dinger operators and quantum spin systems also meet in the large research field of Anderson localization and more generally random Schr\"{o}dinger operators, see e.g.\ Martinelli \& Scoppola (1987) for a rigorous approach.
  } Since  for a sufficiently high and broad enough potential barrier the ground state of such a Schr\"{o}dinger operator is approximately given by two Gaussians, each of them located in one of the wells of the potential, we might expect the same result for $J_{N+1}$. In fact, the first two eigenfunctions of this Schr\"{o}dinger operator are approximately given by
\begin{align}
    \psi^{(0)}&\cong \frac{T_a(\varphi_0)+T_{-a}(\varphi_0)}{\sqrt{2}}; &
    \psi^{(1)}\cong\frac{T_a(\varphi_0)-T_{-a}(\varphi_0)}{\sqrt{2}},\label{eq:translatedhermitepolynomials}
\end{align}
where $T_{\pm a}$ is the translation operator over distance $a$  (i.e., $(T_{\pm a}\varphi_0)(x)=\varphi_0(x\pm a)$), where $\pm a$ denotes the minima of the potential well. The functions $\varphi_n$ are the weighted Hermite polynomials given by $\varphi_n(x)=e^{-x^2/2}H_n(x)$, with $H_n$ the Hermite polynomials. We diagonalized the operator $J_{N+1}$ and plotted the first two (discrete) eigenfunctions $\psi_N^{(0)}$ and $\psi_N^{(1)}$. For convenience, we scaled the grid to unity. See Figures \ref{fig:The ground state eigenfunction for N=60} and \ref{fig:first excited state for N=60}.
From these two plots, it is quite clear that both eigenvectors of $h_N^{\text{CW}}$ are approximately given by $\eqref{eq:translatedhermitepolynomials}$.
\begin{figure}[H]
    \centering
    \includegraphics[width=10cm]{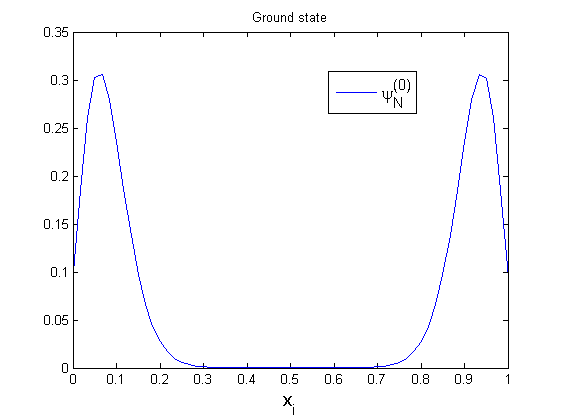}
    \caption{Ground state eigenfunction of $h_N^{\text{CW}}$, computed from the tridiagonal matrix $J_{N+1}$ for $N=60$, $J=1$ and $B=1/2$. The grid points on the horizontal axis are labeled by $x_i=i/N$ for $i=0,...,N$.}
    \label{fig:The ground state eigenfunction for N=60}
\end{figure}
\begin{figure}[H]
    \centering
    \includegraphics[width=10cm]{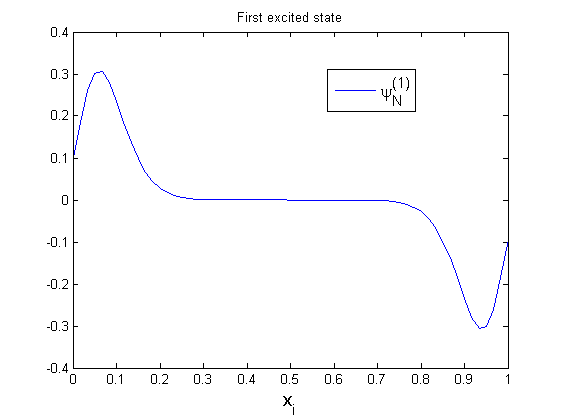}
    \caption{First excited state of $h_N^{\text{CW}}$, computed from the tridiagonal matrix $J_{N+1}$ for $N=60$, $J=1$ and $B=1/2$. The grid points on the horizontal axis are indicated by $x_i=i/N$ for $i=0,...,N$.}
    \label{fig:first excited state for N=60}
\end{figure}
 In the following discussion about numerical (in)accuracy, we assume that not only the ground state $\psi_N^{(0)}$ (for which the claim is a theorem) but also the  first excited state $\psi_N^{(1)}$  lies in the symmetric subspace $\text{Sym}^N(\mathbb{C}^2)$.\footnote{This second claim is not essential for our results themselves but is helpful for the following explanation 
thereof. The point is that the first excited state computed from the tridiagonal matrix $J_{N+1}$ might not the same as the one from the original Hamiltonian $h_N^{\text{CW}}$, in which case Figure \ref{fig:first excited state for N=60} would be misleading. Fortunately, we have shown numerically that up to $N=12$ the first excited state of $h_N^{\text{CW}}$ represented as a matrix on $\mathbb{C}^{2^N}$ is the same as the one corresponding to the tridiagonal matrix $J_{N+1}$. For $N>12$ this computation became unfeasible, as the dimension of the relevant subspace grows exponentially with $N$.} 
For $N\leq 60$ (a number that obviously depends  on the machine precision of the computer used to perform the numerical simulations) our simulations accurately reflect the uniqueness of the ground state, proved from general principles. However,
  for larger values of $N$, clearly visible from about $N\geq80$ (see Figure \ref{numerical degeneracy proof}), up to numerical precision the exact (and unique) ground state is joined by a degenerate state, and the selection of any specific linear combination of those as the numerically obtained ``ground state" is implicitly made by numerical noise  playing a role similar to some symmetry-breaking perturbation (be it the flea perturbation in section \ref{sec4} or a more traditional symmetry-breaking field typically used in quantum spin systems); this noise then localizes the ground state purely due to the inaccuracy of the computation. 
\begin{figure}[H]
    \centering
    \includegraphics[width=10cm]{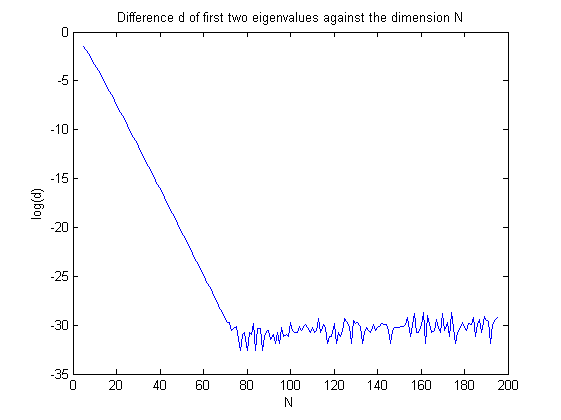}
    \caption{Energy splitting $d=|\epsilon_N^{(0)}-\epsilon_N^{(1)}|$ between the first two eigenvalues $\epsilon_N^{(0)}$ and $\epsilon_N^{(1)}$ of the tridiagonal matrix $J_{N+1}$, for different values of $N$ on a log scale. From about $N=80$ onwards, the energy splitting is in the order of the maximum achievable accuracy of the eigenvalues, so that the first two eigenvalues become numerically degenerate.}
    \label{numerical degeneracy proof}
\end{figure}

Consequently,  plotting the ground state and the first excited state of $h_N^{\text{CW}}$ (at $B=1/2$ and $J=1$)  for $N\geq80$ gives a Gaussian curve,  located in one of the wells.\footnote{We checked this numerically, but omitted the plots. Moreover, we observed that for increasing $N$ these two numerical degenerate states were randomly localized in (one of the) both wells.} Specifically, the new (numerically) degenerate ground state eigenvectors are given by the functions
\begin{align}
   &\chi_+ =\frac{\psi_N^{(0)}+\psi_N^{(1)}}{\sqrt{2}};
   & \chi_-=\frac{\psi_N^{(0)}-\psi_N^{(1)}}{\sqrt{2}}.\label{eq:chifunctions}
\end{align}
Using this result and equation \eqref{eq:translatedhermitepolynomials}, it follows by a simple calculation that
\begin{align}
 &  \chi_+\cong T_a\varphi_0;
   &\chi_-\cong T_{-a}\varphi_0,
\end{align}
where the functions $\varphi_n(x)$ now have to be understood as functions on a discrete grid.
Once again, for $N=60$, the fact that $\psi_N^{(0)}$ (rather than  rather than $\chi_{\pm}$) is the (doubly peaked) ground state is confirmed by the numerical simulations summarized  in Figure \ref{fig:The ground state eigenfunction for N=60}.
\section{The Curie--Weiss Hamiltonian as a Schr\"{o}dinger operator}\label{sec3}\setcounter{equation}{0}
Discretization is the process of approximating the derivatives in (partial) differential equations by linear combinations of function values  $f$ at so-called {\em grid points.} The idea is to discretize the domain, with $N$ of such grid points, collectively called a {\em grid}. We give an example in one dimension:
\begin{align}
    \Omega=[0,X], \ \ \ \ f_i\approx f(x_i), \ \ \ \ (i=0,..,N),
\end{align}
with grid points $x_i=i\Delta$ and grid size $\Delta=X/N$. The symbol $\Delta$ is called the {\em grid spacing}. Note this the grid spacing is chosen to be constant or uniform in this specific example. For the first order derivatives  we have
\begin{align}
    \frac{\partial f}{\partial x}(\bar{x})&=\lim_{\Delta x\to 0}\frac{f(\bar{x}+\Delta x)-f(\bar{x})}{\Delta x}
  =\lim_{\Delta x\to 0}\frac{f(\bar{x})-f(\bar{x}-\Delta x)}{\Delta x}=\lim_{\Delta x\to 0}\frac{f(\bar{x}+\Delta x)-f(\bar{x}-\Delta x)}{2\Delta x}.
\end{align}
These derivatives are approximated with {\em finite differences}. There are basically three types of such approximations:
\begin{align}
    &\bigg{(}\frac{\partial f}{\partial x}\bigg{)}_i\approx\frac{f_{i+1}-f_i}{\Delta x} \ \ \ \text{(forward difference)}\nonumber\\
    &\bigg{(}\frac{\partial f}{\partial x}\bigg{)}_i\approx\frac{f_i-f_{i-1}}{\Delta x} \ \ \ \text{(backward difference)}\nonumber\\
    &\bigg{(}\frac{\partial f}{\partial x}\bigg{)}_i\approx\frac{f_{i+1}-f_{i-1}}{2\Delta x} \ \ \ \text{(central difference)}.
\end{align}
Since it is more accurate in our case, will focus on the central difference approximation method and apply this to the second order differential operator $d^2/dx^2$. 
\subsection{Locally uniform discretization}\label{sec:Locally uniform discretization}
In the example above, the grid spacing was chosen to be uniform. Now reconsider this example on the domain $\Omega=[0,1]$ with uniform grid spacing $\Delta=1/N$. The second order derivative operator is approximately given by
\begin{align}
    f''_i\approx\frac{f_{i-1}-2f_i+f_{i+1}}{\Delta^2} + O(\Delta^2) \ \ (i=1,...,N).\label{secondderivativeuniformgrid}
\end{align}
By throwing away the error term $O(\Delta^2)$ in the above equation, it follows that we can approximate the second derivative operator in matrix form
\begin{align}
   \frac{1}{\Delta^2} \left(
    \begin{array}{ccccc}
    -2   & 1                                \\
      1 & -2 & 1     &  \ \ \text{\huge0}\\
      &        \ddots       & \ddots & \ddots               \\
      & \text{\huge0}  \ \  &  1 & -2 & 1           \\
      &               &   &   1 & -2
    \end{array}
    \right).\label{uniformmatrixdiscretization}
\end{align}
This matrix is the standard discretization of the second order derivative on a uniform grid consisting of $N$ points of length $\Delta\cdot N$, with uniform grid spacing $\Delta$. In this specific case, we have $\Delta=1/N$. We denote this matrix also by $\frac{1}{\Delta^2}[\cdot\cdot\cdot 1 \ {-2} \ 1 \cdot\cdot\cdot]_N$.
\smallskip

Suppose now that we are given a symmetric tridiagonal matrix $A$ of dimension $N$ with constant off- and diagonal elements:
\begin{align}
   A=\left(
    \begin{array}{ccccc}
    b  & a                                \\
      a & b & a     &  \ \ \text{\huge0}\\
      &        \ddots       & \ddots & \ddots               \\
      & \text{\huge0}  \ \  &  a & b & a           \\
      &               &   &   a & b
    \end{array}
    \right).\label{constanttridiagonalmatrix}
\end{align}
We are going to extract a kinetic and a potential energy from this matrix. We write
\begin{align}
    A=a[\cdot\cdot\cdot 1 \ \frac{b}{a} \ 1 \cdot\cdot\cdot]_N=a[\cdot\cdot\cdot 1 \ {-2} \ 1 \cdot\cdot\cdot]_N+\text{diag}(b+2a),
\end{align}
where the latter matrix is a diagonal matrix with the element $b+2a$ on the diagonal. It follows that
\begin{align}
    A=T+V,\label{eq:AisTplusV}
\end{align}
for $T=a[\cdot\cdot\cdot 1 \ {-2} \ 1 \cdot\cdot\cdot]_N$, and $V=\text{diag}(b+2a)$.
In view of the above, the matrix $T$ corresponds to a second order differential operator. This matrix plays the role of \eqref{uniformmatrixdiscretization}, but with uniform grid spacing $1/\sqrt{a}$ on the grid of length $N/\sqrt{a}$.
Since the matrix $V$ is diagonal, it can be seen as a multiplication operator. Therefore, given such a symmetric tridiagonal matix $A$, we can derive an operator that is the sum of a discretization of a second order differential operator and a multiplication operator. The latter operator is identified with the potential energy of the system. Hence we can identify $A$ with a discretization of a Schr\"{o}dinger operator.\footnote{Strictly speaking we have to put a minus sign in front of $T$, as the kinetic energy is defined as $-\frac{d^2}{dx^2}$.}
\smallskip

The next step is to understand what happens in the case where we are given a symmetric tridiagonal matrix with non-constant off- and on-diagonal elements. This is important as we will see, since the Curie--Weiss Hamiltonian, written with respect to the canonical symmetric basis for the subspace $\text{Sym}^N(\mathbb{C}^2)$ of $\mathbb{C}^{2^N}\simeq\bigotimes_{n=1}^{N}\mathbb{C}^2$, is precisely an example of such a matrix. The question we ask ourselves is if we can link such a matrix to a discretization of a Schr\"{o}dinger operator as well (see Appendix B for background).
\smallskip

Writing $T=J_{N+1}$, consider the ratios
\begin{align}
  \rho_j=\frac{h_{j-1}}{h_{j}}=\frac{T_{j+1}}{T_{j-1}}\ \ (j=1,...,N) , 
\end{align}
with non-uniform grid spacing $h_{j}$ and $h_{j-1}$. We divide the original tridiagonal matrix $J_{N+1}$ by $N$ for scaling. Thus, we consider $J_{N+1}/N$. If we then compute the distances $h_{j}$, we see that they are almost all of $O(1)$, except at the boundaries. We will see later that the corresponding Schr\"{o}dinger operator analog of the matrix $J_{N+1}/N$ will be an operator on a domain of length $L=\frac{1}{N}\sum_{j=1}^Nh_j$.
\smallskip

First, we compute the ratios $\rho_j$:
\begin{align}
    \rho_j=\frac{T_{j+1}}{T_{j-1}}=\frac{\sqrt{(N-j)(j+1)}}{\sqrt{(N-j+1)j}}=\sqrt{\frac{N-j}{N-j+1}}\sqrt{\frac{j+1}{j}}=\sqrt{\frac{1}{1+\frac{1}{N-j}}}\sqrt{1+\frac{1}{j}}.
\end{align}
Use the following approximations
\begin{align}
   &\sqrt{1+\frac{1}{j}}\approx 1+O\bigg{(}\frac{1}{2j}\bigg{)}=1+O(1/j) \ \ \ \text{and} \\
   &\sqrt{\frac{1}{1+1/(N-j)}}\approx 1-O\bigg{(}\frac{1}{2(N-j)}\bigg{)}= 1+O\bigg{(}\frac{1}{N-j}\bigg{)},
\end{align}
and observe that for $j>>1$ and $N-j>>1$, we have
\begin{align}
    &\sqrt{1+\frac{1}{j}}\approx O(1) \ \ \ \text{and} \\
   &\sqrt{\frac{1}{1+1/(N-j)}}\approx O(1),
\end{align}
Moreover, we see that the ratio satisfies 
\begin{align}
    \rho_j\approx1+O(1/j)+O\bigg{(}\frac{1}{N-j}\bigg{)},
\end{align}
using the fact that that the $\text{big}$-O notation respects the product, that $O(\frac{1}{j}\frac{1}{N-j})\leq O(1/j)$, and also $O(\frac{1}{j}\frac{1}{N-j})\leq O(\frac{1}{N-j})$. 

In the next subsection, we will see from numerical simulations that to a good approximation the ground state eigenfunction is a double peaked Gaussian with maxima centered in the minima of some double well potential that we are going to determine. This potential occurs in a discrete Schr\"{o}dinger operator analog of the matrix $J_{N+1}/N$ for $N$ large, i.e., in the semiclassical limit.
Furthermore, we showed by numerical simulations (Figure \ref{fig:widthhalfheight} below) that the width $\sigma$ of each Gaussian-shaped\footnote{We mean that if we plot the discrete points and draw a line through these points, then the corresponding graph has the shape of a Gaussian.} ground state of $J_{N+1}$ located at one of minima of the potential is of order $\sqrt{N}$, and hence that each peak rapidly decays to zero, so that the ground state eigenfunction is approximately zero at both boundaries. In particular, the size of the domain where the peak is non-zero contains $O(\sqrt{N})$  grid points, as we clearly observe from the figure. This is an approximation, since we neglect the (relatively small) function values of the Gaussian that are more than $O(\sqrt{N})$ away from the central maximum. However, this approximation is highly accurate, as the Gaussian decays to zero exponentially. 
\begin{figure}[!htb]
    \centering
    \includegraphics[width=8cm]{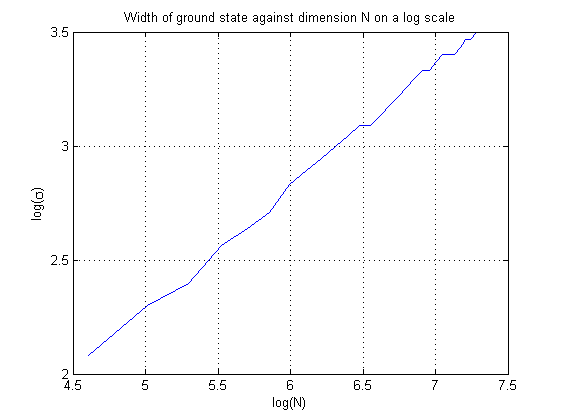}
    \caption{Width at half height of the ground state eigenvector of $J_{N+1}$ ($B=1/2$ and $J=1$) against $N$, for $N=100:50:1500$ on a log scale. The slope of the line is about $0.5$, which means that the width $\sigma$ goes like $\sqrt{N}$. 
    }
    \label{fig:widthhalfheight}
\end{figure}
This observation  is extremely important, as we will now see. 

Let us first focus on the left-located Gaussian. For a point $x_j=j/N$, clearly $j\in O(N)$. Therefore, for $N$ large enough,
\begin{align}
    \rho_j=1+O(1/N),
\end{align}
since for these values of $j<N-j$ we have $O(\frac{1}{N-j})\leq O(1/j)$.
For the right-located peak, we have $N-j<j$, so that in this case $O(1/j)\leq O(\frac{1}{N-j})$, and we find
\begin{align}
    \rho_j=1+O\bigg{(}\frac{1}{N-j}\bigg{)}.
\end{align}
We will now show that, in the present context where we work on a domain of order $L$ (i.e.\ $[0,L]$), we indeed have uniform discretization
on a subinterval of this domain corresponding to a matrix segment of $O(\sqrt{N})$ entries.
We start with the peak on the left. Since the error per step that we make equals $\rho_j$, it follows that the error on a matrix segment of  length of order $\sigma$ equals $\rho_{j}^{\sigma}\approx (1+\frac{1}{N})^{\sigma}$ for $j<N-j$ and $N$ large.
Denoting the off-diagonal element corresponding to the minimum $x_{j_0}$ of the potential well by $T_{j_0}$, for the off-diagonal elements within a range of order $\sigma$, we derive the next estimate:
\begin{align}
    |T_{j_0}-T_{j_0+\sigma}|\approx|T_{j_0}-O\bigg{(}(1+\frac{1}{N})^{\sigma}\bigg{)}T_{j_0}|=T_{j_0}|1-O\bigg{(}(1+\frac{1}{N})^{\sigma}\bigg{)}|\leq C\frac{\sigma}{N},\label{proveestimate}
\end{align}
where we used the inequality $(1+1/N)^{\sigma}\leq 1+C\frac{\sigma}{N}$ as well as the fact that $T_{j_0}$ is of order $1$. Here, $C>1$ is a constant independent of $N$.
Since the left peak of the Gaussian eigenfunction is approximately non-zero corresponding to a matrix segment of length of order $\sqrt{N}$, we apply the above estimate to $\sigma\approx \sqrt{N}$. We see immediately that 
$|T_{j_0}-T_{j_0+\sigma}|$ goes to zero. Therefore, on matrix segment of length of order $\sqrt{N}$ centered around the left minimum $x_{j_0}$ of the potential, the off-diagonal elements coincide in the limit $N\to\infty$. This means that the grid spacing becomes constant and hence 
that we have locally uniform discretization of the domain. By symmetry, the same is true for the peak located on the right of the well.
We conclude that for large $N$ the tridiagonal matrix locally behaves like a kinetic energy, and therefore like a discretized Schr\"{o}dinger operator. 
All this will be explained in more detail in the next subsection.
\subsection{Link with a Schr\"{o}dinger operator}
\label{sec:Link with a discrete Schrodinger operator}
Our aim is to show that for $N$ large enough, the matrix $J_{N+1}/N$ obtained from the Curie--Weiss Hamiltonian by reduction (see \S\ref{sec2})
 locally approximates a discretization matrix representing a Schr\"{o}dinger operator describing a particle moving in a symmetric double well. This means that there exists a sub-block of
$J_{N+1}/N$ that has a form approximately given by the sum of $-\frac{1}{h^2}[\cdot\cdot\cdot 1\ {-2} \ 1 \cdot\cdot\cdot]_{N+1}$ (for a certain $h$) and a diagonal matrix playing the role of a potential. We started with the symmetric tridiagonal matrix $J_{N+1}/N$ with non-constant entries. In order to link this matrix to a second derivative and a multiplication operator, we needed to apply the non-uniform discretization procedure. \smallskip

At first sight the off-diagonal matrix entries of $J_{N+1}/N$ cannot immediately be identified with a second order derivative operator, but we have seen that in the limit $N\to\infty$ we do have uniform discretization on some interval of a length scale corresponding to a segment of $O(\sqrt{N})$ matrix entries. Consequently, for sufficiently large $N$ this discretization becomes approximately uniform on this length scale. From this, we are now going to extract a matrix of the form \eqref{constanttridiagonalmatrix} corresponding to a Schr\"{o}dinger operator on $L^2([0,1])$. 
We first consider the $(N\times N)$-matrix $H_N$, defined by
\begin{align}
    H_N=T_N+V_N,\label{eq:discretizationmatrixschrodinger}
\end{align}
where at $x=n_+/N$ we define
\begin{align}
    T_N(x)=-\frac{1}{d(x)^2}[\cdot\cdot\cdot 1\ {-2} \ 1 \cdot\cdot\cdot]_{N},\label{kinetenergyalvastgedefinieerd}
\end{align}
keeping in mind that $d(x)$ varies per entry, and $d(x)$ is defined by
\begin{align}
    d(x)=\frac{1}{\sqrt{B}}\frac{1}{((1-x)x)^{1/4}}. \label{approximationnonuniformdiscretization}
\end{align}
$V_N$ is a diagonal matrix (and hence a multiplication operator, as in the continuum) given by
\begin{align}
    V_N(x)=-\frac{1}{2}(2x-1)^2-B\bigg{(}\sqrt{(1-x)(x+\frac{1}{N})}+\sqrt{(1-x+\frac{1}{N})x})\bigg{)}.
    \label{potentialalvastgedefinieerd}
\end{align}
Note that $d(x)$ corresponds to the non-uniform grid spacing. We rewrite:
\begin{align}
    H_N=-\frac{1}{N^2}\frac{1}{(d(x)/N)^2}[\cdot\cdot\cdot 1\ {-2} \ 1 \cdot\cdot\cdot]_{N}+V_N
\end{align}
and hence the total length of the interval is given by $\frac{1}{N}\sum_{n_+}d(n_+/N)$. So if $N\to\infty$, it follows that
\begin{align}
    L:=\int_0^1d(t)dt=\int_0^1\frac{1}{\sqrt{B}}\frac{1}{((1-t)t)^{1/4}}dt=\frac{2\Gamma[3/4]^2}{\sqrt{B} \sqrt{\pi}}.
\end{align}
Moreover, the interval coordinate is then given by
\begin{align}
    D(x)=\int_0^xd(t)dt\in [0,L], \ \  \text{for} \ \ x\in [0,1].\label{intervalcoordinate}
\end{align}
It follows that on each matrix segment of $\sqrt{N}$ entries (for $N$ large), the matrix $H_N$ is an approximation of the following Schr\"odinger operator on $L^2([0,L])$ with the familiar substitution  $\hbar=\frac{1}{N}$:
\begin{align}
    h_1=-\frac{1}{N^2}\Delta+ V_N(x),\label{schrodingeroperatorrelatedtotridiagonalmatrix}
\end{align}
where $x$ is chosen appropriately and the segment describes an interval of length $\frac{d(x)\sqrt{N}}{N}=\frac{d(x)}{\sqrt{N}}$, at location
\begin{align}
    D(x)=\int_0^x\frac{1}{\sqrt{B}}\frac{1}{((1-t)t)^{1/4}}dt, \ \ x\in [0,1];
\end{align}
in the total interval of length $L$.\footnote{Namely, if $z\in [0,L]$, then $x=D^{-1}(z)$. Hence the coordinate $z$ corresponds to a location in the interval $[0,L]$, whilst $x$ plays the role of the `argument' of a matrix entry of $V$, e.g. if $x=n_+/N$, then $V(x)=V(n_+/N)\equiv V_{n_+}$.}
We saw in the previous section that on length scales of order
$d(x)/\sqrt{N}$, the function $d(x)$ is approximately constant so that $T_N$ can be seen as a locally uniform discretization of the second order derivative  $\Delta$ on the sub interval approximately given by $[D(x)-d(x)/\sqrt{N},D(x)+d(x)/\sqrt{N}]$. Therefore, on  these length scales the operator $\Delta$ indeed corresponds to a uniform part of $\frac{1}{(d(x)/N)^2}[\cdot\cdot\cdot 1\ {-2} \ 1 \cdot\cdot\cdot]_{N}$.

We see in this section that to a very good approximation the spectral properties of both (a priori different) matrices $J_{N+1}/N$ and $\tilde{H}_{N+1}$, defined in \eqref{tildeH} below, coincide, improving with increasing $N$. 
Rescaling the interval $[0,L]$ to unity yields a Schr\"{o}dinger operator $h$, but now defined on $L^2([0,1])$. Hence $h$ is given by
\begin{align}
    h= -\frac{1}{L^2N^2}\Delta_y+\tilde{V}_N(y) \ \ (y\in [0,1]), \label{eq:correctctsSchrodinger1}
\end{align}
where, via the potential $V_N$  defined in \eqref{potentialalvastgedefinieerd},
 the potential $\tilde{V}_N$ is defined by 
\begin{align}
    \tilde{V}_N(y)=V_N(D^{-1}(yL)) \ \ y\in[0,1], \label{eq:potentialwellcontinuousschrodinger}
\end{align} 
Let us now consider the scaled tridiagonal matrix $J_{N+1}/N$. By definition it follows that the  elements on the diagonal,  the lower diagonal, and  the upper diagonal are given by
\begin{align}
-\frac{1}{2}(2\frac{n_+}{N}-1)^2, && -B\sqrt{(1-\frac{n_+}{N}+\frac{1}{N})\frac{n_+}{N}} , && B\sqrt{(1-\frac{n_+}{N})(\frac{n_+}{N}+\frac{1}{N}}), 
\end{align}
respectively. 
The idea is to split the matrix $J_{N+1}/N$ into two parts, one corresponding to the kinetic energy and the other to the potential energy. However, since the off-diagonal elements of $J_{N+1}/N$ are non-constant and not an even function around $x=1/2$, we cannot isolate these elements and decompose $J_{N+1}/N$ into two parts. Therefore, we approximate the off-diagonal elements by the function \eqref{approximationnonuniformdiscretization}, which is even in $1/2$. This approximation makes perfect sense in the semi-classical limit, and from this it is clear that $J_{N+1}/N\approx H_{N+1}$, where $H_{N}$ is defined by \eqref{eq:discretizationmatrixschrodinger}. In the limit $N\to\infty$ we pass to the continuum, so that the discrete points $n_+/N$ are understood as real numbers in $[0,1]$. Therefore, equations \eqref{approximationnonuniformdiscretization} and \eqref{intervalcoordinate} make sense on $[0,1]$. The operator \eqref{schrodingeroperatorrelatedtotridiagonalmatrix} is a Schr\"{o}dinger operator defined on the space $[0,L]$, whereas the operator \eqref{eq:correctctsSchrodinger1}  is obtained by rescaling to the unit interval. Therefore, scaling $H_N$ to unity yields the matrix
\begin{align}
    -\frac{1}{L^2N^2}\frac{1}{(\frac{d(x)}{LN})^2}[\cdot\cdot\cdot 1\ {-2} \ 1 \cdot\cdot\cdot]_{N}+V_N(D^{-1}(yL)).\label{nonuniformdiscretizationofschrodingeroperatorh}
\end{align}
where each segment with order $\sqrt{N}$ matrix entries now describes an interval of length $d(x)/(L\sqrt{N})$ at location $y=D(x)/L$ in $[0,1]$, approximating the Schr\"{o}dinger operator
\begin{align}
-\frac{1}{L^2N^2}\Delta+ V_N(x) \ \ \text{with} \ \ x=D^{-1}(yL),
\end{align}
which indeed coincides with \eqref{eq:correctctsSchrodinger1}. The matrix \eqref{nonuniformdiscretizationofschrodingeroperatorh} corresponds to a non-uniform discretization of $h$, in such a way that it is locally uniform, namely on length scales of order $\sqrt{N}$. Finally, discretizing $\Delta_y$ on the unit interval with uniform grid spacing $1/N$ yields the following discretization matrix:
\begin{align}
    K_N=-\frac{1}{L^2N^2}\frac{1}{(\frac{1}{N})^2}[\cdot\cdot\cdot 1\ {-2} \ 1 \cdot\cdot\cdot]_{N}=-\frac{1}{L^2}[\cdot\cdot\cdot 1\ {-2} \ 1 \cdot\cdot\cdot]_{N}.\label{uniformdiscritization}
\end{align}
Using this discretization of $-\frac{1}{L^2N^2}\Delta_y$, we show that for $N\to\infty$ the spectrum of the matrix $J_{N+1}/N$ approximates the spectrum of the matrix 
\begin{equation}
\tilde{H}_N=K_N+\tilde{V}_N. \label{tildeH}
\end{equation}
Using the fact that $\tilde{H}_N$ is a discretization of  \eqref{eq:correctctsSchrodinger1},
this indeed establishes a link between the compressed Curie--Weiss Hamiltonian and a Schr\"{o}dinger operator.\footnote{Note that the matrix $\tilde{H}_N$ corresponding to \eqref{eq:correctctsSchrodinger1} is by definition a discretization of the Schr\"{o}dinger operator on the whole of $[0,1]$, and not only for subintervals of length $d(x)/L\sqrt{N}$.}

\smallskip

\noindent\emph{\underline{Remark.}}
Consider the Schr\"{o}dinger operator with a symmetric double well potential, given by \eqref{TheHam}. Recall from \cref{sec:Numerical simulations} that for a sufficiently high and broad potential well, the ground state of such a Schr\"{o}dinger operator is approximately given by two Gaussians, each of them located in one of the wells of the potential (see e.g. Hellfer \& Sj\"{o}strand, 1985). This fact will be useful for the next round of observations.
\smallskip

We will now show that the Gaussian-shaped ground state of $J_{N+1}/N$, indeed localizes in both minima of the potential well $\tilde{V}_N$. To this end, we have made a plot of the scaled potential $\tilde{V}_N$ from equation \eqref{eq:potentialwellcontinuousschrodinger} on a domain of length $1$, for $B=1/2$ and $J=1$. See Figure \ref{fig:potentialandgaussian}.  We immediately recognize the shape of a symmetric double well potential. The points in its domain are given by $y_j=j/N$ for $j=0,...,N$, and the argument of the potential is given by $D^{-1}(y_jL)$, as explained before. Then we diagonalized the matrix $J_{N+1}/N$ and computed the ground state eigenvector. We plot this together with the potential in Figure \ref{fig:potentialandgaussian}. One should mention that  only one Gaussian peak is visible, not two. As we have seen in \cref{sec:Numerical simulations}, this must be due to the finite precision of the computer i.e., the first two eigenvalues are already numerically degenerate. Thus the computer picks a linear combination of the first two eigenvectors as ground state (viz. \eqref{eq:chifunctions}), even though we know from the Perron-Frobenius Theorem (Appendix A) that the ground state is always unique for any finite $N$.\footnote{Due to this degeneracy, the computer picks or the symmetric combination $\chi_+$, or the anti-symmetric combination $\chi_-$, which depends on the algorithm. We changed $N$ and observed that the location of the peak changed as well. This suggests that random superpositions of the two degenerate states are formed.} We also observe that the maxima of the Gaussian ground state peaks are precisely centered in the minima of these two wells (as should be the case). It is clear from this figure that the ground state is localized in (one of) the minima of the double well.

\begin{figure}[!htb]
    \centering
    \includegraphics[width=10cm]{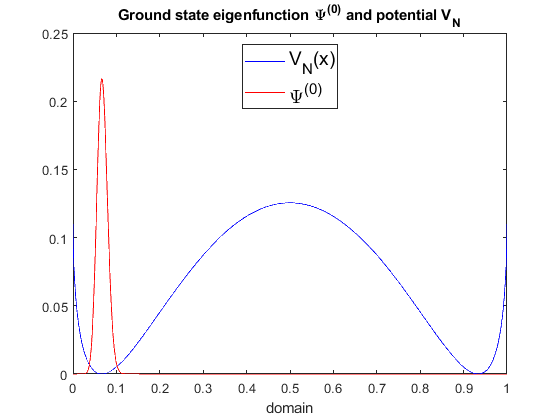}
    \caption{Scaled potential $V_N:=\tilde{V_N}$ and the ground state eigenfunction corresponding to $J_{N+1}/N$ for $N=1000$ on the unit interval $[0,1]$,  plotted on uniform grid points. The potential is shifted so that its minimum is zero, and is plotted on the grid points $x$ corresponding to the solution of $y=D(x)/L$, as explained in the main text.      }
    \label{fig:potentialandgaussian}
\end{figure}
\begin{figure}[!htb]
    \centering
    \includegraphics[width=10cm]{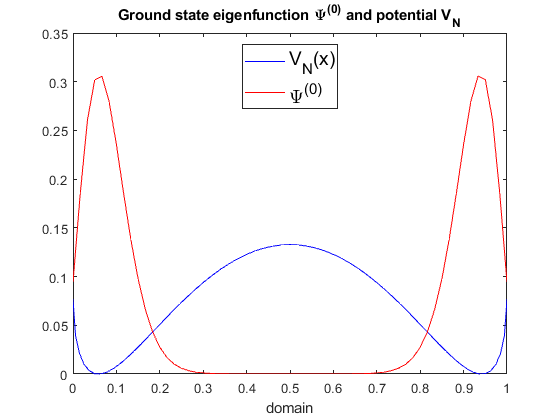}
    \caption{Scaled and shifted potential $V_N:=\tilde{V_N}$ for $B=1/2$ and $J=1$, plotted on grid points that are solution of the equation $y=D(x)/L,$ and the ground state eigenfunction corresponding to $J_{N+1}/N$ for $N=60$, plotted on uniform grid points.}
    \label{fig:doubledegenerategroundstateN=60}
\end{figure}

One might suggest that there would be some critical value of $N$ for which the eigenvalues are not yet degenerate for the computer. We have seen in Figure \ref{numerical degeneracy proof} that this value of $N$ (depending on our machine) is about $N=80$. Figure \ref{fig:doubledegenerategroundstateN=60} is a similar plot for the ground state for $N=60$, on a par with Figure \ref{fig:The ground state eigenfunction for N=60} in \cref{sec:Numerical simulations}. We recognize the well-known doubly peaked Gaussian shape, but now it is localized in both minima of the potential well. This is displayed in Figure \ref{fig:doubledegenerategroundstateN=60}. These figures show that there is a convincing relation between the matrix $J_{N+1}/N$ and a Schr\"{o}dinger operator describing a particle in a double well. 
The double well shaped potential is a result of the choice $B=1/2$. The value of the magnetic field needs to be within $[0,1)$ in order to get spontaneous symmetry breaking of the ground state in the classical limit $N\to\infty$. For $B\geq 1$ the Curie--Weiss model will not display SSB, not even in the classical limit. For these values of $B$, the well will be a single potential, as depicted in Figure \ref{fig:singelwellgroundstateN=60}. 

\begin{figure}[!htb]
    \centering
    \includegraphics[width=10cm]{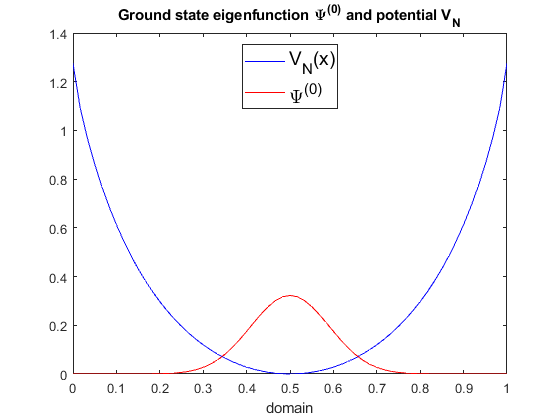}
    \caption{Scaled and shifted potential $V_N:=\tilde{V}_N$ for $B=2$ and $J=1$, and the ground state eigenfunction corresponding to $J_{N+1}/N$ for $N=60$. The single well is clearly visible. The ground state is plotted on the uniform grid corresponding to $[0,1]$, and  also now the ground state eigenvector is normalized to $1$.
    }
    \label{fig:singelwellgroundstateN=60}
\end{figure}
In view of the corresponding Schr\"{o}dinger operator, the ground state in the classical
limit will not break the symmetry for a single potential well, and is therefore also compatible with the Curie--Weiss model for $B\geq1$. 
We now return to the regime $0\leq B<1$. One can compute the spectral properties of the matrix $J_{N+1}/N$ and compare them with those of the matrix $\tilde{H}_N$ corresponding to the sum of the uniform discretization $K_N$ of the second order derivative (viz. \eqref{uniformdiscritization}) and $\tilde{V}_N$ (viz. \eqref{eq:potentialwellcontinuousschrodinger}).  We will see that to a very good approximation the spectral properties of both matrices coincide and get better with increasing $N$. We have programmed the matrix $\tilde{H}_N$ in MATLAB. The matrix has been diagonalized. The spectral properties have been compared to those of $J_{N+1}/N$ (Table $1$). We computed the first ten eigenvalues of the matrix $J_{N+1}/N$, denoted by $\epsilon_n$, and those of $\tilde{H}_N$, denoted by $\lambda_n$. In the left column the eigenvalues 
$\epsilon_n$ are displayed. In the right column the absolute difference $|\lambda_n-\epsilon_n|$ is displayed. The number $N=1000$ is fixed (the numerical eigenvalues in the table are dimensionless, since we put $J=1$, cf.\  footnote \ref{fn4}). 

\begin{center}
\begin{tabular}
{ p{1cm} p{2cm} p{2.5cm}}
 \multicolumn{3}{c}{Table 1. Eigenvalues and absolute differences} \\
 \hline\hline
 n & $\epsilon_n$ & $|\lambda_n-\epsilon_n|$ 
 \\
 \hline
 0   & -0.6251   & $7.6038\times 10^{-7}$\\
 1 &  -0.6251  & $7.6038\times 10^{-7}$\\
 2 &-0.6234 & $2.2446\times 10^{-6}$\\
 3 & -0.6234 & $2.2446\times 10^{-6}$\\
 4 &-0.6217 & $4.8378\times 10^{-6}$\\
 5&-0.6217  & $4.8378\times 10^{-6}$\\
 6&-0.6200 & $7.0222\times 10^{-6}$ \\
 7&-0.6200 & $7.0222\times 10^{-6}$ \\
 8&-0.6183 & $8.8010\times 10^{-6}$\\
 9&-0.6183 & $8.8010\times 10^{-6}$ \\
 \hline\hline
\end{tabular}
\end{center}
We see that the first ten eigenvalues for both matrices are the same up to at least six decimals. It is also clear that these eigenvalues are doubly degenerate, at least up to six decimals. Moreover, we plotted all the eigenvalues $\epsilon_n$ and $\lambda_n$ corresponding to the bound states, i.e., the energy levels within the well. This is depicted in Figure \ref{spectrumboundstates} below.
\begin{figure}[!htb]
    \centering
    \includegraphics[width=10cm]{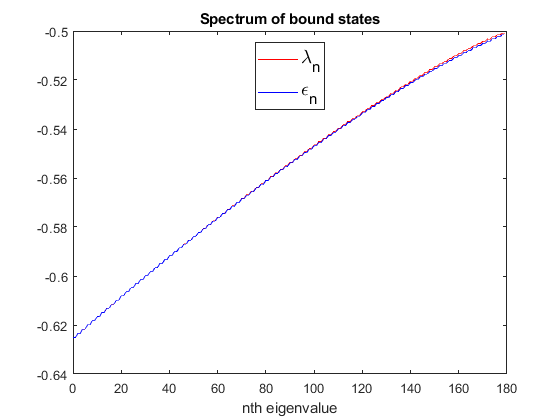}
    \caption{Spectrum of the bound states in the unscaled potential $\tilde{V}_N$ for $B=1/2$, $J=1$, and $N=1000$; $\epsilon_n$ corresponds to the eigenvalues of $J_{N+1}/N$ and $\lambda_n$ to those of $\tilde{H}_N$.
    }
    \label{spectrumboundstates}
\end{figure}

It follows that the energies of both systems are approximately the same.
Moreover, we compared a plot of the ground state eigenvector of $J_{N+1}/N$ with the one corresponding to $\tilde{H}_N$, Completely analogously to $J_{N+1}/N$, we observed also now that the ground state of $\tilde{H}_N$ is located in the minima of the potential well, only concentrates on length scales of order $\sqrt{N}$, and exponentially decays to zero. This is in agreement with the theory of Schr\"{o}dinger operators, since $\tilde{H}_N$ represents a discretization of a Schr\"{o}dinger operator as we have seen in the beginning of this section. 
Table $1$, Figure \ref{spectrumboundstates} and this observation show that we have strong numerical evidence that the original tridiagonal matrix is related to $\tilde{H}_N$, which was \emph{a priori} not clear since the off-diagonal elements of $J_{N+1}/N$
are non-constant.
\smallskip

\noindent\emph{\underline{Remark.}}
It is well known that the ground state of the operator $h$ for finite $N$ looks approximately like a doubly peaked Gaussian, where each peak is centered in one of the minima of the potential. For infinite $N$, these peaks will behave like delta distributions. Moreover, numerical simulations (Figure \ref{fig:widthhalfheight}) show that the eigenfunctions of $\tilde{H}_N$ live approximately on a grid of order $\sqrt{N}$ points on the interval $[0,1]$. Using the above discretization, we then have about $\sqrt{N}$ steps of $1/N$ each, so that in particular the ground state Gaussian has a width of $1/\sqrt{N}$. On the one hand, it is clear that this width will go to zero as $N\to\infty$. On the other hand, also the unit interval depends on $N$, as the latter has to be discretized with $N+1$ points. Therefore, the total number of grid points in the ground state peak living on a subset of order $\sqrt{N}$ is 
\begin{align}
    \frac{1/\sqrt{N}}{1/N}=\sqrt{N}.
\end{align}
In fact, due to the discretization of the grid we have a better approximation of the Gaussian ground state when $N$ increases.
\smallskip

We have computed the minimum of the potential, and subtracted this minimum from the lowest eigenvalues. Then, we  have set the potential minimum to zero. These shifted eigenvalues then live in a positive potential well. For $J_{N+1}/N$ with $N=1000$, we now consider its eigenvalues $\epsilon_n$. We have already seen above that the lowest eigenvalues of $J_{N+1}/N$ become doubly degenerate. Therefore, we identify these approximately doubly degenerate eigenstates with one single state that we denote by $n$. It follows that each $n$ corresponds to two (approximately) degenerate eigenvalues, e.g., $n=0$ corresponds to the ground state as well as the first excited state of $J_{N+1}/N$,
$n=1$ corresponds to the second and the third excited state, and so on.
This is displayed in table $2$ below.
\begin{center}
\begin{tabular}
{p{2cm} p{2cm}}

 \multicolumn{2}{c}{Table $2$. Shifted eigenvalues for odd values of n} \\
 \hline\hline
 n &  $\epsilon_n$ \\
 \hline
 0 & 0.000863  \\
 1 & 0.002591 \\
 2 & 0.004310\\
 3 & 0.006013\\
 4 & 0.007710\\
 \hline\hline
\end{tabular}
\end{center}
\noindent
Using this table, we deduce that  when $N$ is large enough the energy splitting is  approximately given by $\sqrt{3}/N$. The ground state (shifted) eigenvalue (which is approximately doubly degenerate) is then given by $\frac{1/2\sqrt{3}}{N}$, the first excited state (also approximately doubly degenerate) is $\frac{3/2\sqrt{3}}{N}$, the second excited state is $\frac{5/2\sqrt{3}}{N}$ etc. Therefore, there is excellent numerical evidence that for sufficiently large $N$ the (approximately) doubly degenerate shifted spectrum of $J_{N+1}/N$ is given by
\begin{align}
    \frac{(n+1/2)\sqrt{3}}{N},\ \ (n=0,1,2,...). \label{eq:harmonicspectrum}
\end{align}
This firstly shows that  for large $N$ the wells approximately decouple since tunneling is suppressed in the limit, and secondly that
each well of the double well potential is locally quadratic and therefore approximately has the spectrum of a harmonic oscillator (the latter 
approximation increasingly breaks down at higher excitation energies, however). 

Finally, both tables have been computed for fixed $N$, but also different values of $N$ need to be considered.  Table 3 shows the ground state eigenvalue $\epsilon_{0}^{N}$ of the matrix $J_{N+1}/N$. 
\begin{center}
\begin{tabular}
{p{2cm} p{2cm} }
 \multicolumn{2}{c }{Table $3$. $N\epsilon_{0}^N$ for increasing $N$}  \\
 \hline\hline
 N &  $N\epsilon_0^N$ \\
 \hline
 100 & 0.8473  \\
 1000 & 0.8633 \\
 2500 & 0.8653\\
 5000 & 0.8655\\
 \hline\hline
\end{tabular}
\end{center}
\noindent
Thus $\epsilon_{0}^{N}$ will approximate $\frac{1/2\sqrt{3}}{N}$ when $N$ increases, which confirms \eqref{eq:harmonicspectrum}.
\section{Symmetry breaking  in the Curie--Weiss model}\label{sec4}\setcounter{equation}{0}
In this section we introduce a perturbation in the quantum Curie--Weiss model $h_N^{\text{CW}}$ such that the symmetric and hence delocalized ground state as displayed in Figure \ref{fig:The ground state eigenfunction for N=60}  breaks the $\Z_2$ symmetry and hence localizes already for finite (but large) $N$.
To find the appropriate perturbation of the Curie--Weiss Hamiltonian, let us first continue the discussion in the Introduction by reviewing the flea perturbation of the symmetric double well potential, following Jona-Lasinio \emph{et al} (1981), Graffi \emph{et al} (1984), and Simon (1985). 
\subsection{Review of the ``flea" perturbation on the double well potential}
Generalizing (\ref{TheHamc}), consider a one-dimensional Schr\"{o}dinger operator 
\begin{equation}
h_{\hbar}=-\hbar^2\frac{d^2}{dx^2}+V(x),\label{hun}
\end{equation}
where the potential $V$ is $C^{\infty}$, non-negative, strictly positive at $\infty$, zero at two points $m_1$ and $m_2>m_1$,
and $\Z_2$-symmetric. Then consider the \emph{Agmon metric} $d_V$ on $\R$, defined by
\begin{equation}
    d_V(x, y) =\int_{x}^y\sqrt{V(s)}ds.
\end{equation}
A  \emph{flea perturbation} $\dl V$ is $C^{\infty}$, non-negative, bounded, and such that $\dl V(x)= 0$ in neighbourhoods of the minima $m_1$ and $m_2$ of $V$. Nonetheless,  its support should be  close to one of these minima, in the following sense. We use the following notation:
\begin{itemize}
\item $d_0=d_V(m_1,m_2)$  is the Agmon distance between the two minima of $V$;
\item $d_1=2\min\{d_V(m_1,\text{supp}\, \dl V), d_V (m_2,\text{supp}\, \dl V)\}$ is twice the  Agmon distance between the support of $\dl V$  and the minimum that is closest to to this support;
\item  $d_2=2\max\{d_V(m_1,\text{supp}\,\dl V), d_V (m_2,\text{supp}\, \dl V)\}$ is twice the Agmon distance between the support of $\dl V$  and the minimum that is furthest away from this support. 
\end{itemize}
Finally, and perhaps most crucially, as $\hbar\raw 0$ the perturbation should dominate the energy difference $\Delta(E)_{\hbar}=E_1(\hbar)-E_0(\hbar)$ between the ground state of the unperturbed Hamiltonian (\ref{hun}) and the first excited state. But since  $\Delta(E)_{\hbar}\sim \exp(-d_0/\hbar)$, satisfying this condition is a piece of cake.
Detailed analysis (Simon, 1985) then shows that:
\begin{itemize}
\item If  $d_0<d_1\leq d_2$ there is no localization of the ground state.
\item  If $d_1<d_0\leq d_2$  the ground state localizes near the minimum $m_i$ furthest from the support of $\dl V$ (if $\dl V$ were negative, it would localize closest to the support of $\dl V$);
\item If $d_1<d_2<d_0$ the ground state localizes as in the previous case.
\end{itemize}
Though perhaps surprising at first sight, this is actually easy to understand, either from energetic considerations or 
 from a $2\x 2$ matrix analogy (Simon, 1985; Landsman, 2017, \S 10.2). First,  the ground state tries to minimize its energy according to the rules:
 \begin{itemize}
\item The cost of localization (if $\dl V=0$)  is $\mathcal{O}(e^{-d_0/\hbar})$.
\item The cost of turning on $\dl V$ is $\mathcal{O}(e^{-d_1/\hbar})$ when the wave-function is delocalized.
\item The cost of turning on $\dl V$ is $\mathcal{O}(e^{-d_2/\hbar})$ when the wave-function is localized in the well around $x_0=m_i$ for which $d_V(m_i,\mbox{supp}\ \dl V)=d_2$ ($i=1,2$).
\end{itemize}
For the latter, define a 2-level Hamiltonian 
 \begin{equation}
h_{\hbar}^{(2)}=\half
\left(
\begin{array}{cc}
 0 & -\Delta(E)_{\hbar}     \\
 -\Delta(E)_{\hbar} &  0
 \end{array}
\right). \label{HD}
\end{equation}
The eigenvalues and eigenvectors of $h_{\hbar}^{(2)}$, respectively, are given by
\begin{align}
E_0(\hbar)& =-\half\Delta(E)_{\hbar} &&\phv^{(0)}=\frac{1}{\sqrt{2}}\left(
\begin{array}{c}1\\ 1 \end{array}\right);\\
E_1(\hbar)& =-\half\Delta(E)_{\hbar} &&\phv^{(0)}=\frac{1}{\sqrt{2}}\left(
\begin{array}{c}1\\ -1 \end{array}\right).
\end{align}
Hence $E_1(\hbar)-E_0(\hbar)=\Delta(E)_{\hbar}$, and the (metaphorically) localized states would be
\begin{equation}
\phv^+\equiv\half\left(\phv^{(0)}+\phv^{(1)}\right)=
\left(
\begin{array}{c}0\\ 1 \end{array}\right), \:\:\:\phv^-\equiv \half(\phv^{(0)}-\phv^{(1)})=
\left(
\begin{array}{c}1\\ 0  \end{array}\right). \label{double2}
\end{equation}
 Now take $\dl>0$ and introduce a ``flea''  perturbation by changing $h_{\hbar}^{(2)}$ to $h_{\hbar}^{(2)}+\dl^{(2)}V$, where
\begin{equation}
\dl^{(2)}V=\left(
\begin{array}{cc}
0 & 0\\
 0 &  \dl
 \end{array}
\right), \label{PR}
\end{equation}
 The eigenvalues of $h_{\hbar}^{(2)}+\dl^{(2)}V$ then shift from $(E_0(\hbar), E_1(\hbar))$ to $(E_-(\hbar), E_+(\hbar))$, where 
\begin{equation}
E_{\pm}(\hbar)= \half\left(\dl\pm\sqrt{\dl^2+\Delta(E)_{\hbar}^2}\right), \label{Eplm}
\end{equation}
with corresponding normalized eigenvectors $(\psi_{\hbar}^-,\psi_{\hbar}^+)$ given by
\begin{eqnarray}
\psi^{-}_{\hbar} &=&\frac{1}{\sqrt{2}}\left(\dl^2+\Delta(E)_{\hbar}^2 +\dl\sqrt{\dl^2+\Delta(E)_{\hbar}^2}\right)^{-1/2}\left( \begin{array}{c}
\Delta(E)_{\hbar} \\
\dl+\sqrt{\dl^2+\Delta(E)_{\hbar}^2} \end{array} \right);\\
\psi^+_{\hbar} &= &
\frac{1}{\sqrt{2}}\left(\dl^2+\Delta(E)_{\hbar}^2 -\dl\sqrt{\dl^2+\Delta(E)_{\hbar}^2}\right)^{-1/2}\left( \begin{array}{c}
\Delta(E)_{\hbar} \\
\dl-\sqrt{\dl^2+\Delta(E)_{\hbar}^2} \end{array} \right).
\end{eqnarray}
As long as $\lim_{\hbar\raw 0} \Delta(E)_{\hbar}/\dl=0$, we have  
\begin{equation}
\lim_{\hbar\raw 0}\psi^{\pm}_{\hbar}=\phv^{\pm}, 
\end{equation}
so that under the influence of the flea perturbation the ground state localizes as $\hbar\raw 0$. There is no need for a separate limit $\dl\raw 0$, since it follows from the limit $\hbar\raw 0$.
\smallskip
 
\noindent Returning to the real thing,  a (mathematically) very natural flea-like perturbation $\dl V$ for the Schr\"{o}dinger operator $h_{\hbar}$, and the one we shall mimic for the Curie--Weiss model, is 
\begin{align}
   \dl V_{b,c,d}(x) &=
  \begin{cases}
   d\exp{\bigg{[}\frac{1}{c^2}-\frac{1}{c^2-(x-b)^2}\bigg{]}}       & \text{if}\ |x-b| < c \\
   0        & \text{if} \ |x-b|  \geq c \label{flealikeperturbation}
  \end{cases},
\end{align}
where the parameters $(b,c,d)$  represent the location of its center $b$, its width $2c$ and its height $d$, respectively. Tuning these, the conditions above can be satisfied in many ways: for example, if $b>c> m_2$ the condition  $d_1<d_0\leq d_2$ for asymmetric localization reads
\begin{equation}
2\int_{m_2}^{b-c}\sqrt{V(s)} < \int_{m_1}^{m_2}\sqrt{V(s)}ds\leq 2\int_{m_1}^{b-c}\sqrt{V(s)},
\end{equation}
which can  be satisfied by putting $b$ close to $m_2$ (depending on the central height of $V$).
\subsection{Peturbation of the Curie--Weiss Hamiltonian}
The next step in our analysis, then, is to find an analogous perturbation to \eqref{flealikeperturbation} but now  for the Curie--Weiss Hamiltonian, using the fact that the Curie--Weiss model is related to a Schr\"{o}dinger operator (see previous section).
To this end, 
recall the symmetriser $S_N$ defined in \er{defSN}, which is a projection onto the space of all totally symmetric vectors.
As we have seen, a basis for the space of totally symmetric vectors is given by the vectors       $\{|n_+,n_-\rangle | \ n_+=0,...,N \}$, which spans the subspace $\text{Sym}^N(\mathbb{C}^2)$.
In order to define the symmetry-breaking flea perturbation, again we may pick a basis for $\mathcal{H}_{N}$ and define the perturbation on a basis for $\mathcal{H}_{N}$. Since the original Hamiltonian was defined on the standard basis $\beta$, we do the same for the perturbation.
In the proof of Theorem \ref{thm:tridiagonal}, we have seen there is a bijection between the number of orbits and the dimension of $\text{Sym}^N(\mathbb{C}^2)$, namely
\begin{align}
    \mathscr{O}^{k}\leftrightarrow |N-k,k\rangle,
\end{align}
where $k$ in $|N-k,k\rangle$ labels the number of occurrences of the vector $e_2$ in any of the basis vectors $\beta_i \in \beta$, and likewise $N-k$ in $|N-k,k\rangle$ labels the number of occurrences of the vector $e_1$ in $\beta_i$, so that $N-k$ stands for the number of spins in the up direction whereas the second position $k$ denotes the number of down spins. By definition, $S_N$ maps any basis vector $\beta_k\in \beta$ in a given orbit $\mathscr{O}^{k}$ to the same vector in $\text{Sym}^N(\mathbb{C}^2)$, which equals
\begin{align}
\frac{1}{\sqrt{{{N}\choose{k}}}}\sum_{l=1}^{{{N}\choose{k}}}\beta_{k_l}.
\end{align}
Here the suffix $l$ in $\beta_{k_l}$ labels the basis vector $\beta_k\in\beta$ within the same orbit $\mathscr{O}^{k}$. So for each orbit $\mathscr{O}^{k}$, we have ${N}\choose{k}$ vectors $\beta_k$. Hence for each $l=1,...,{{N}\choose{k}}$ the image $S_N(\beta_{k_l})$ under $S_N$ is always the same, namely the coordinate vector written with respect to $\beta$.

The perturbation we are going to define is very similar to the symmetriser $S_N$. Of course, since we have expressed our original Curie--Weiss Hamiltonian with respect to this $|n_+,n_-\rangle$ basis, we need to do the same for the perturbation. Since we have a partition of our $2^N$-dimensional basis $\beta$ into $N+1$ orbits, we define our perturbation $\Dl V_N$ by
\begin{align}
    \Dl V_N(\beta) =  \dl V_{b,c,d}\left(\frac{k}{N}\right) S_N(\beta), \label{fleaCW}
\end{align}
where $k\in \{0,...,N \}$ is the unique number such that $\beta\in\mathscr{O}^{k}$, and
 $\dl V_{b,c,d}$ is defined by \eqref{flealikeperturbation}. We will see that a specific choice of parameters results in localization of the ground state as $N\to\infty$. First, note that when we transform the matrix $[\Dl V_N]_{\beta}$ in the $\beta$ basis to the corresponding matrix  in the $|n_+,n_-\rangle$ basis, it is obvious that it becomes a diagonal matrix with the value $\Dl V_k\equiv \dl V_{b,c,d}(k/N)$ at entry $(k,k)$, since all basis vectors within the same orbit are mapped to the same vector under $\Dl V_N$.
If we can show that 
\begin{equation}
[S_N,\Dl V_N]=0,\label{SNWN}
\end{equation}
 and that the ground state eigenfunction of the perturbed Hamiltonian $h_N+\Dl V_N$ is unique and positive,\footnote{Equivalently, without using positivity of the eigenfunction one would reach the same conclusion by showing that the ground state is unique and  $\Dl V_N$ commutes with the entire permutation group on $N$ elements (like the unperturbed Hamiltonian). Given positivity, it is enough to check commutativity merely with the projection $S_N$, because all nontrivial permutations would transform the ground state wavefunction into a function that is no longer strictly positive. We are indebted to Valter Moretti for this comment.}
 then we may conclude that the ground state lies in the subspace $\text{Sym}^N(\mathbb{C}^2)$. The reason for this is the same as for the unperturbed Curie--Weiss Hamiltonian:  these properties push  this eigenvector into the subspace  $\text{ran}(S_N)=\text{Sym}^N(\mathbb{C}^2)$, so that we may diagonalize this Hamiltonian represented as a matrix that can be written with respect to the symmetric subspace, which will be a tridiagonal matrix of dimension $N+1$ as well. This makes computations much easier, and allows one to compare the unperturbed system with the perturbed one. Similarly as for the Curie--Weiss model, a sufficient condition for uniqueness and positivity of the ground state of the perturbed matrix, originally written with respect to the standard basis for $\mathcal{H}_{N}$, is non-negativity and irreducibility, so that we can apply the Perron--Frobenius Theorem, as explained in Appendix A.
This depends, of course, on the parameters of the perturbation $\Dl V_N$. We will come back to this later.
In order to prove (\ref{SNWN}),  it suffices to do so on a basis, for which we take the standard basis $\beta$ of the $N$-fold tensor product. Fix a basis vector $\beta$ in $\mathscr{O}^{k}$. Then we  immediately find
\begin{equation}
    \Dl V_NS_N(\beta)=\dl V_{b,c,d}(k)S_N^2(\beta)=S_N\dl V_{b,c,d}(k)S_N(\beta)=S_N\Dl V_N(\beta),
\end{equation}
which proves (\ref{SNWN}).
The last step is to show that the Hamiltonian $-(h_N+\Dl V_N)$, written with respect to the standard basis $\beta$ for $\mathcal{H}_{N}$, is a non-negative and irreducible matrix. Since the off-diagonal elements are completely determined by the unperturbed Hamiltonian and are never zero, the matrix can never be decomposed into two blocks, so that it remains irreducible. Non-negativity is achieved when
\begin{align}
    \frac{J}{2N}(2n_+-N)^2-\Dl V_{n_+} \geq 0,\label{conditiononlambda}
\end{align}
which is clearly satisfied for positive $d$.
Therefore, taking $d>0$  together with the fact that $h_N+\Dl V_N$ commutes with $S_N$, shows in the same way as for the unperturbed Curie--Weiss model that the ground state of the perturbed Hamiltonian is unique and positive, and therefore  lies in $\text{ran}(S_N)=\text{Sym}^N(\mathbb{C}^2)$, where it can be diagonalized.

Recall that in \cref{sec:Numerical simulations} the ground state $\psi_N^{(0)}$ of the unperturbed Hamiltonian $h_N^{\text{CW}}$ was approximately given by two Gaussians (for $N$ large), each of them located in one of the wells of the potential, and was given by
\begin{align}
    \psi_N^{(0)}\cong \frac{T_a(\varphi_0)+T_{-a}(\varphi_0)}{\sqrt{2}}.\label{groundstatedoublypeakedchi}
\end{align}
We now show numerically that our flea perturbation  $\Dl V_N$ forces the ground state to localize for large $N$, leaving an analytic proof \`{a} la Simon (1985) to the future. \smallskip

As in section \ref{sec3}, we  extract the potential corresponding to the perturbed Hamiltonian $h_N+\Dl V_N$, written with respect to the symmetric basis, scale this Hamiltonian by $1/N$, and translate the potential so that its minima are set to zero.  We plot this perturbed potential on the unit interval in Figure \ref{fig:perturbedpotentialright}, where for convenience we scale the domain to the unit interval. Moreover, we plot the ground state of this Hamiltonian and the one corresponding to the unperturbed one in Figure \ref{fig:rightlocalization1},  observing  localization of the ground state in the left sided well. Numerical simulations show that the eigenvalues of the perturbed Hamiltonian are non-degenerate, so that the ground state is  unique, and hence localization is not a result of numerical degeneracy but is genuinely caused by the perturbation. A similar simulation for  flea perturbation, but now located on the left site of the barrier, is shown in Figure \ref{fig:perturbedpotentialleft}. As expected, we now see a localization of the ground state to the right side of the barrier (Figure \ref{fig:rightlocalization}).

\begin{figure}[H]
    \centering
    \includegraphics[width=10cm]{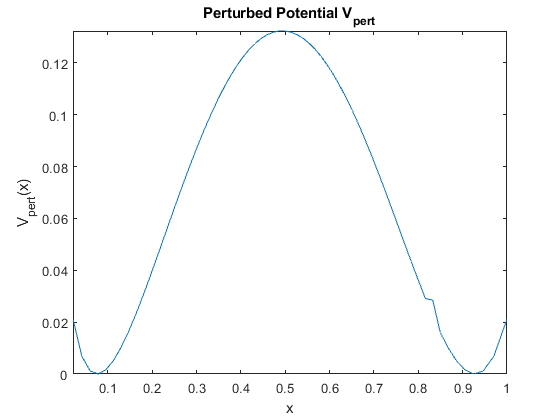}
    \caption{Perturbed potential computed from the tridiagonal matrix $h_{N}^{CW}+\Dl V_N$ in the symmetric basis for $N=65$, $b=(N-9)/N$, $c=1/45$, $d=0.4$, $J=1$ and $B=1/2$. This potential has a `flea' on the right side of the well due the perturbation $\Dl V_N$.}
    \label{fig:perturbedpotentialright}
\end{figure}
\begin{figure}[H]
    \centering
    \includegraphics[width=10cm]{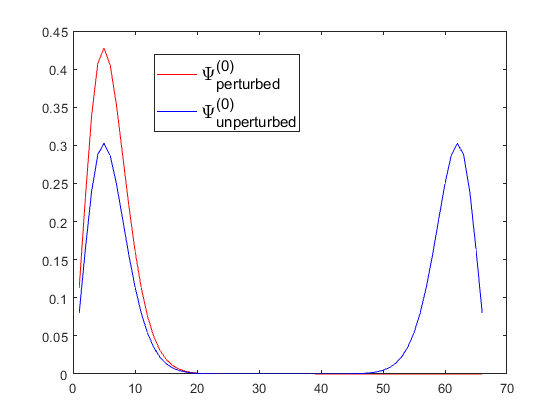}
    \caption{Corresponding ground state (in red) of the perturbed Hamiltonian  $h_{N}^{CW}+\Dl V_N$ is already localized for $N=65$, $b=(N-9)/N$, $c=1/45$, $d=0.4$, $J=1$ and $B=1/2$.
  Localization takes place on the left side of the well, since the flea raises the potential on the right side.}
    \label{fig:rightlocalization1}
\end{figure}
\begin{figure}[!htb]
    \centering
    \includegraphics[width=10cm]{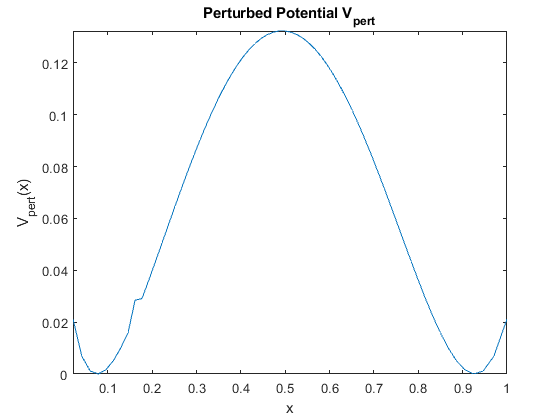}
    \caption{Perturbed potential computed from the tridiagonal matrix $h_{N}^{CW}+\Dl V_N$ in the symmetric basis for $N=65$, $b=9/N$, $c=1/45$, $d=0.4$, $J=1$ and $B=1/2$. This potential has a `flea' on the left side of well due the perturbation $\Dl V_N$.}
    \label{fig:perturbedpotentialleft}
\end{figure}
\begin{figure}[H]
    \centering
    \includegraphics[width=10cm]{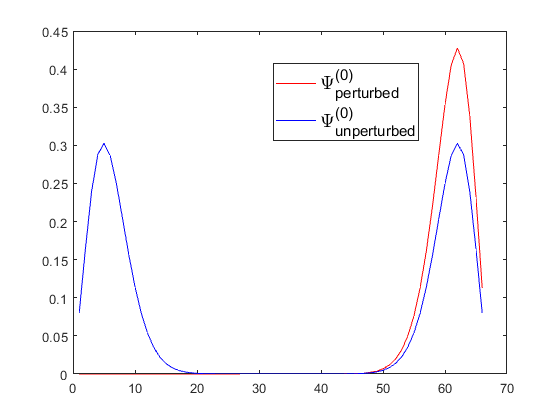}
    \caption{Corresponding ground state (in red) of the perturbed Hamiltonian  $h_{N}^{CW}+\Dl V_N$ is already localized for $N=65$, $b=9/N$, $c=1/45$, $d=0.4$, $J=1$ and $B=1/2$.
  Localization takes place on the right side of the well, since the flea raises the potential on the left side.}
    \label{fig:rightlocalization}
\end{figure}

 Our conclusion is that due to the flea perturbation, the ground state will localize in one of the wells depending on where the flea is put. As in the continuous Schr\"{o}dinger operator case, this localization may be understood from energetic considerations: for example, if the perturbation is located on the right, then the relative energy in the left-hand part of the double well is lowered, so that localization will be to the left. This even happens if the perturbation vanishes in the limit $N\to\infty$. We have seen that our (unscaled) tridiagonal matrix $J_{N+1}$ to a very good approximation, is a discretization of the operator $Nh$, where
\begin{align}
    h= -\frac{1}{N^2L^2}\Delta_y+V,
\end{align}
where $\Delta_y=d^2/dy^2$ and $V$ the double well potential. It follows that for the perturbed Hamiltonian (where $\Dl V_N=O(1)$ fixed, as in our case)
\begin{align}
    J_{N}+\Dl V_N&\approx N\left(-\frac{1}{(NL)^2}\Delta_y+V\right)+\Dl V_N \nonumber\\
   & =N\left(-\frac{1}{(NL)^2}\Delta_y+V+\Dl V_N/N\right), \label{N01}
\end{align}
which implies that the perturbation $\Dl V_N/N$ effectively disappears as $N\to\infty$. According to
 Jona-Lasinio \emph{et al} (1981) and Graffi \emph{et al} (1984), we can even take 
 \begin{align}
 \Dl V_N&=O(1/N); \label{N02}\\
\Dl V_N/N&=O(1/N^2), \label{N03}
\end{align} 
and also in this case a collapse of the ground state takes place. This is also clear from the $2\x 2$ matrix argument given in the previous section.\newpage

\section{Conclusion}\label{sec6}
We have established a link between the quantum Curie--Weiss Hamiltonian and a $1d$ Schr\"{o}dinger operator describing a particle in a symmetric double well potential for $\hbar>0$, where $\hbar=1/N$. We have shown that the scaled quantum Curie--Weiss Hamiltonian restricted to the $(N+1)$-dimensional subspace $\text{Sym}^N(\mathbb{C}^2)$ approximates a discretization matrix corresponding to this Schr\"{o}dinger operator, defined on $L^2([0,1])$. Subsequently, we have shown that due to a small perturbation a $\mathbb{Z}_2$-symmetry of the Curie--Weiss model can already be explicitly broken for finite $N$, resulting in a pure ground state in the classical limit.  This confirms Anderson's mechanism for \ssb\ in finite quantum systems (cf.\ the Introduction), but our specific ``flea" perturbation came from similar results for Schr\"{o}dinger operators with a symmetric double well potential in the classical limit $\hbar\raw 0$ (Jona-Lasinio \emph{et al}, 1981; Graffi \emph{et al}, 1984; Simon, 1985).
 The results in these papers, which were obtained analytically, precisely match ours, obtained numerically, but this stil leaves the challenge of finding analytic proofs of our results. Furthermore, our approach, and especially the specific perturbations we use, should be extended to the case of continuous symmetries, where the relevant low-lying states (which for continuous symmetries are infinite in number as $N\raw \infty$) now seem to be completely understood (Tasaki, 2019).
The dynamics of the transition from a localized ground state of the unperturbed Hamiltonian to a delocalized ground state of the perturbed Hamiltonian as $N\raw\infty$ remains to be understood (this is an understatement);  once achieved, it would perhaps also contribute to the solution of the measurement problem or  Schr\"{o}dinger Cat problem along similar lines (Landsman \& Reuvers, 2013; Landsman, 2017, Chapter 11). 
\smallskip

We finally discuss some of the correspondences as well as differences between the symmetry-breaking perturbations we used and those considered in  the condensed matter physics literature. The key physical idea is the same in both cases:\footnote{The first quotation is from  van Wezel \& van den Brink (2007) and the second one is from Simon (1985).}
\begin{quote}
``The general idea behind  spontaneous symmetry breaking is easily formulated: as a collection of quantum particles becomes larger, the symmetry of the system as a whole becomes more unstable against small perturbations," 
\end{quote} 
where we add that the same is true as $\hbar$ becomes smaller at fixed system size, cf.\ Landsman (2017) and references therein, and that the precise form of the instability is that
for small $N$ or large $\hbar$ the perturbation plays almost no role (in either the spectral properties of the Hamiltonian or in its eigenfunctions), whereas for large $N$ or small $\hbar$  it metaphorically
\begin{quote}
``can irritate the elephant enough so that it shifts its weight, i.e., we will see that the ground state, instead of being asymptotically in both wells, may reside asymptotically in only one well".
\end{quote} 
 As explained in the Introduction, this accounts for the fact that real materials (which are described by the quantum theory of finite systems) do display SSB, even though the theory seems to forbid this. As we (and most condensed matter physicists) see it, these perturbations should arise naturally and might correspond either to imperfections of the material or contributions to the Hamiltonian from the (otherwise ignored) environment. 
 \smallskip
 
Mathematically, though, there are some differences between our mathematical physics approach to \ssb\ in finite quantum systems  and the standard theoretical physics one.\footnote{See references in footnote \ref{refsfootnote}.}
These differences are perhaps best explained by starting with the very \emph{definition} of \ssb. High sensitivity to small perturbations in the relevant regime are common to both approaches (as they are to the classical theory of critical phenomena and phase transitions, and even to complexity theory, which is full of bifurcations and tipping points, cf.\ Scheffer, 2009).  In the physics literature this sensitivity to small perturbations is usually taken into account by 
adding an ``infinitesimal"  symmetry-breaking term  like \begin{equation}
\delta h_N=\varep\sum_{x=1}^N\sg_3(x)  \label{stpert}
\end{equation}
to the Curie--Weiss Hamiltonian  (\ref{eq:curieweissN}) and arguing that the correct order of the limits in question is $\lim_{\varep\raw 0}\lim_{N\raw\infty}$, which gives \ssb\ by one of the two pure ground states on the limit algebra, the sign of $\varep$ determining the direction of symmetry breaking. In contrast, the opposite order $\lim_{N\raw\infty}\lim_{\varep\raw 0}$ gives a symmetric but mixed and hence unstable or unphysical ground state on the limit algebra.\footnote{This procedure goes back at least to Bogoliubov (1970), see also Wreszinski \& Zagrebnov (2018).} Thus whenever there is difference
\begin{equation}
\lim_{\varep\raw 0}\lim_{N\raw\infty} \neq \lim_{N\raw\infty}\lim_{\varep\raw 0}, \label{difflim}
\end{equation}
this is taken to be a defining property  \ssb.  This is valid, but it feeds the idea that, if \ssb\ occurs, the limit $N\raw\infty$ is ``singular" (e.g.\ Batterman, 2002; Berry, 2002; van Wezel \& van den Brink, 2007), an idea that is increasingly challenged in the philosophical literature (Butterfield, 2011) and has almost disappeared from the mathematical physics literature. 
 \smallskip
 
 Instead, we work with a definition of \ssb\ that is standard in mathematical physics
 and applies equally to finite and infinite systems (provided these are described correctly), and to classical and quantum systems, namely that the ground state (suitably defined) of a system with $G$-invariant dynamics (where $G$ is some group, typically a discrete group or a Lie group) is \emph{either}   pure but not $G$-invariant, \emph{or} $G$-invariant but mixed.\footnote{See e.g.\ Landsman (2017), Definition 10.3, page 379, and Liu \& Emch (2005). It may seem more natural to just require that the ground state fails to be $G$-invariant, but in the C*-algebraic formalism we rely on ground states that are not necessarily pure, which leaves the possibility of forming $G$-invariant mixtures of non-invariant states that lose the purity properties one expects physical ground states to have. Similarly for equilibrium states, where `pure' is replaced by `primary', which is a mathematical property of a pure thermodynamical phase.
Order parameters  \emph{follow} from this definition, cf.\ \S 10.3, \emph{loc.\ cit.}} 
Accordingly, what is  singular about the thermodynamic limit of systems with \ssb\ 
is the fact that the exact \emph{pure} ground state of a finite quantum system converges 
to a \emph{mixed} state on the limit system.\footnote{And similarly for the limit $\hbar$ of the symmetric double well system, where, as pointed out in the Introduction, the $\hbar =0$  limit of the ground state 
is the mixed state \er{ompomm}, as opposed to, for example, a Dirac delta-function located between the wells (i.e.\ at $q=0$).
See also Harrell (1980) and Simon (1985). This also explains the fact that the flea perturbations even work if their support is localized away from the bottoms of the two wells (or, equivalently, from the two peaks of the unperturbed ground state); see especially  Graffi \emph{et al} (1984) for a detailed explanation of this point.
} However,  this singular behaviour is exactly what is \emph{avoided} by Anderson's tower of states triggered by the right perturbations, where a single limit (i.e.\ either $\hbar\raw 0$ or $N\raw\infty$) suffices,\footnote{This can also be achieved by letting $\varep$ depend on $N$ in some suitable way, but if one does so one might as well drop the factorized form of perturbations like \er{stpert} altogether and admit more general expressions similar to the flea. Some literature indeed seems to do this, though not in a mathematically precise way.
} in which the  (still) \emph{pure} ground state of the perturbed Hamiltonian (which is a  symmetry-breaking linear combination of low-lying states)
converges to some symmetry-breaking  \emph{pure} ground state on the limit system (be it a classical system or an infinite quantum system). The ensuing limit is then duly continuous in an appropriate meaning of the word (which it would \emph{not} be without the perturbation mechanism).\footnote{This is described via continuous fields of C*-algebras and states (Landsman 2017, Chapters 7--10).} 
\smallskip

Of course, in order for this symmetry breaking to be \emph{spontaneous} rather than \emph{explicit}, the perturbation should be small to begin with, and should disappear in the pertinent limit.\footnote{This must be taken to be a purely mathematical criterion, for real systems have $\hbar>0$ and $N<\infty$, so that strictly speaking any form of symmetry breaking in Nature is explicit rather than spontaneous.}

As we have seen, it is easy to endow the ``flea" $\dl V$ with these properties; for symmetry breaking in the double well potential all we need is that $\Dl E\raw 0$ more rapidly than $\dl V\raw 0$ as $\hbar\raw 0$, cf.\ (\ref{N01}) -  (\ref{N03}), and for symmetry breaking in the Curie--Weiss model the same holds for $N\raw\infty$. 
Since in these models $\Dl E$ vanishes exponentially in $-1/\hbar$ or $-N$ as $\hbar\raw 0$ or $N\raw\infty$ (a fact about the spectrum that has nothing to do with the perturbations), this can hardly go wrong.

In this light, the following may help explain the relationship between our approach and the traditional one based on  the non-commuting limits (\ref{difflim}). For the latter, consider the $(x,y)$ plane with $x=1/N$ (or $x=\hbar$) and $y=\varep$, and for the 
argument in the previous paragraph, take $x=\dl V$ and $y=\Dl E$. In case of \ssb, certain physical quantities $m(x,y)$ (like the magnetization) are discontinuous at $(0,0)$ and hence their value at $(0,0)$ depends on the path towards the origin. Eq.\   (\ref{difflim}) expresses this path dependence in a crude way, which is captured by perturbations like (\ref{stpert}), but
 the perturbations we consider follow specific \emph{parametrized paths} towards $(0,0)$. This explains why we are able to work with a single limit $\hbar\raw 0$, since in our models $\Dl E=\Dl E(\hbar)$ is \emph{given} (by the double well or Curie--Weiss Hamiltonian) and  $\dl V=\dl V(\hbar)$ can be freely \emph{chosen}, subject to the condition just mentioned, i.e.\ that the approach $\Dl E(\hbar)\raw 0$ (as $\hbar\raw 0$) must be quicker than $\dl V(\hbar)$. 
\smallskip

{The wealth of possible flea perturbations, as opposed to the more straightforward symmetry-breaking perturbations \`{a} la (\ref{stpert}) considered in the condensed matter physics literature (i.e.\ coupling to a small constant external magnetic field)
also weakens the critique expressed by Wallace (2018), to the effect that cooling e.g.\  ferromagnets (but also antiferromagnets, (anti)ferroelectrics, ferroelastics, and superconductors) does not, experimentally, lead to a ground state in which the spins are all aligned with the external field, but rather to a state with various domains in which the spins are merely aligned locally but may differ quite randomly from domain to domain (Kittel, 2005, pp.\ 346--354). Obtaining the right domain sizes admittedly requires fairly special fleas (as Wallace points out), but in the absence of any dynamical theory of cooling any argument in this direction, including this critique, is speculative. Wallace's suggestion that the formation of domains with specific sizes has already been explained by energetic considerations of the kind presented by Kittel (\emph{loc.\ cit.}, attributed to Landau and Lifshitz), which even require the \emph{absence} of constant external magnetic fields, is not effective against flea perturbations, which indeed are \emph{required} to explain why \emph{specific} domains in a \emph{given} specimen (as opposed to \emph{general} domains in \emph{generic} materials) are formed; the problem is quite analogous to explaining \ssb.  Kittel also draws attention to the importance of singe-domain regions, e.g.\ for magnetic recording devices, but also in sedimentary rock formations.

Another objection to our approach to \ssb, which equally well applies to the standard approach in condensed matter physics, is that a mechanism based on symmetry-breaking perturbations cannot be applied to gauge theories and hence cannot explain the Higgs mechanism. However, it is not local gauge symmetry that is broken in the Higgs mechanism, but a global symmetry, and by choosing gauge-invariant observables it is even possible to describe it without any reference to \ssb, see Landsman (2017), \S 10.10.
\appendix
\section{Perron--Frobenius Theorem}\setcounter{equation}{0}
In this appendix we provide the machinery for proving uniqueness and strict positivity of the ground state of the Curie--Weiss model for any finite $N$, based on the
Perron-Frobenius Theorem. Though the result is well known, the precise combination of arguments is hard to find in the literature.

We start with some definitions and basis facts.
\begin{definition}\label{def:strictlypositive}
\begin{enumerate}
\item  A square matrix is called \underline{non-negative} if all its entries are non-negative. It is called \underline{strictly positive} if all its entries are strictly positive.
\item A non-negative matrix $a$ is called \underline{irreducible} if for every pair indices $i$ and $j$ there exists a natural number $m$ such that $(a^m)_{ij}$ is not equal to zero. If the matrix is not irreducible, it is said to be reducible. 
\item 
A \underline{directed graph} is a graph $G=(V,E)$ with vertices $V$ and edges $E$ such that the vertices are connected by the edges, and where the edges have a direction. A directed graph is also called a digraph.
\item A digraph is called \underline{strongly connected} if there is a directed path $x$ to $y$ between any two vertices $x,y$.
\end{enumerate}
\end{definition}
We use the notion of the directed graph or digraph of a square $N$-dimensional matrix $a$, denoted by $G(a)$. We say that the digraph of $a$ is the digraph with 
\begin{align}
    V=\{1,2,...,N\}, \nonumber\\
    E=\{(i,j)| \ a_{ij}\neq 0\}. \nonumber
\end{align}
The following result links irreducibility of a non-negative matrix to strongly connectedness of its corresponding digraph. The proof is easy and therefore omitted.
\begin{lemma} \label{Lemma:irreducibilitystornglyconnected}
A non-negative square matrix $a$ is irreducible if and only if the digraph of $a$ is  strongly connected.
\end{lemma}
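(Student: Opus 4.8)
The plan is to reduce the statement to the standard combinatorial reading of the entries of a matrix power, which makes both implications immediate. First I would record the key identity: for any $m\in\N$,
\begin{align}
(a^m)_{ij}=\sum_{k_1,\dots,k_{m-1}=1}^{N}a_{ik_1}a_{k_1k_2}\cdots a_{k_{m-1}j}, \label{walksum}
\end{align}
and observe that, since $a$ is non-negative, every summand on the right-hand side is $\geq 0$. Consequently $(a^m)_{ij}\neq 0$ holds if and only if at least one summand is strictly positive, i.e.\ if and only if there exist intermediate indices $k_1,\dots,k_{m-1}$ with $a_{ik_1}>0$, $a_{k_1k_2}>0,\ \dots,\ a_{k_{m-1}j}>0$. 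By the definition of $G(a)$ this says exactly that $(i,k_1),(k_1,k_2),\dots,(k_{m-1},j)$ are all edges of $G(a)$, i.e.\ that there is a directed walk of length $m$ from $i$ to $j$ in $G(a)$.

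With this dictionary in hand I would argue both directions. For ($\Rightarrow$), assume $a$ is irreducible and fix vertices $i\neq j$; irreducibility yields some $m$ with $(a^m)_{ij}\neq 0$, which by the observation above produces a directed walk from $i$ to $j$ in $G(a)$, and any such walk contains a directed path between the same endpoints, so $G(a)$ is strongly connected. For ($\Leftarrow$), assume $G(a)$ is strongly connected and fix indices $i,j$; a directed path $i=k_0\to k_1\to\cdots\to k_m=j$ contributes a strictly positive summand $a_{ik_1}a_{k_1k_2}\cdots a_{k_{m-1}j}$ in \eqref{walksum}, and since all remaining summands are $\geq 0$ we get $(a^m)_{ij}>0$, whence $a$ is irreducible.

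The only points that need a word of care are bookkeeping ones rather than genuine obstacles: one should note that a directed walk may be shortened to a directed path between the same two vertices (this is what lets the first implication land literally in the definition of strong connectedness), and one should dispose of the diagonal case $i=j$ separately --- for strong connectedness the trivial path suffices, and for irreducibility one may route from $i$ to any other vertex and back, or simply take $m=0$ in \eqref{walksum} under the convention $a^0=\I$. I expect no real difficulty beyond this; the lemma is essentially a reformulation of \eqref{walksum} together with the non-negativity of $a$, which is exactly why the paper can afford to omit it.
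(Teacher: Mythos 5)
Your proof is correct. The paper itself omits the proof of this lemma (``The proof is easy and therefore omitted''), and your argument --- expanding $(a^m)_{ij}$ as a sum over products along length-$m$ walks and using non-negativity to exclude cancellation, then shortening walks to paths --- is exactly the standard argument it has in mind; the only loose ends are conventions about the diagonal case $i=j$ (and the degenerate $1\times 1$ zero matrix), which you already flag and which are immaterial for the application to $-h_N^{\mathrm{CW}}$.
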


We now come to the Perron-Frobenius Theorem. There are two versions of this theorem: one for {\em strictly positive} matrices, and the other for {\em irreducible} matrices. We use the version for irreducible matrices since the Curie--Weiss Hamiltonian $-h_N^{CW}$, represented with respect to the standard basis for
$\bigotimes_{n=1}^{N}\mathbb{C}^2$, is a non-negative and irreducible matrix of dimension $2^N$, as we will see below.
\begin{theorem}\label{thm:perronfrobeniuslinearalgebra}
Let $a$ be an $N\times N$ real-valued non-negative matrix, and denote its spectral radius by $r(a)=\lambda$ . If $a$ is irreducible, then $\lambda=r(a)$  is an eigenvalue of $a$, which is positive, simple, and corresponds to a strictly positive eigenvector.
\end{theorem}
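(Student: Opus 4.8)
The plan is to prove the theorem via the Collatz--Wielandt characterisation of the Perron root, combined with the standard device of replacing the irreducible matrix $a$ by the \emph{strictly positive} matrix $Q=(I+a)^{N-1}$ (here we use that $a$ is irreducible, so $Q$ has no zero entries). For a nonnegative nonzero $x\in\R^N$ set
\begin{equation}
R(x)=\min_{i:\,x_i\neq 0}\frac{(ax)_i}{x_i},\qquad \lambda_0=\sup\{R(x):x\geq 0,\ x\neq 0\}.
\end{equation}
The first task is to show $\lambda_0$ is finite and attained. It is bounded above by $\max_i\sum_j a_{ij}$. Since $Q$ is a polynomial in $a$ it commutes with $a$ and preserves the componentwise order, so from $ax\geq R(x)x$ one gets $a(Qx)\geq R(x)(Qx)$ with $Qx>0$, hence $R(Qx)\geq R(x)$; therefore $\lambda_0$ equals the supremum of $R$ over the \emph{compact} set $\{Qx:x\geq 0,\ \|x\|_1=1\}$, which consists of strictly positive vectors on which $R$ is continuous. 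Thus $\lambda_0=R(z)$ for some $z>0$.

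Next I would establish $az=\lambda_0 z$ and the properties of $\lambda_0$. By definition $az\geq\lambda_0 z$; if some component were strict, then $az-\lambda_0 z\geq 0$ is nonzero, so $Q(az-\lambda_0 z)>0$ strictly, i.e.\ $a(Qz)>\lambda_0(Qz)$ in every component, giving $R(Qz)>\lambda_0$ --- a contradiction. Hence $az=\lambda_0 z$, and since $z=(1+\lambda_0)^{-(N-1)}Qz$ with $Q$ strictly positive, $z>0$. Positivity of $\lambda_0$ comes from evaluating $R$ at the all-ones vector $\mathbf{1}$: irreducibility forces every row of $a$ to be nonzero (a zero $i$-th row would make $(a^m)_{ij}=0$ for all $m$), so $R(\mathbf{1})=\min_i\sum_j a_{ij}>0$ and $\lambda_0\geq R(\mathbf{1})>0$. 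Finally, for any eigenpair $ay=\mu y$ one has $a|y|\geq|ay|=|\mu|\,|y|$ componentwise, so $|\mu|\leq R(|y|)\leq\lambda_0$; since $\lambda_0$ is itself an eigenvalue, $r(a)=\lambda_0=:\lambda$.

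It remains to prove simplicity. For the \emph{geometric} multiplicity: for a real eigenvector $y$ with $ay=\lambda y$, the inequality $a|y|\geq\lambda|y|$ is forced to an equality by the same $Q$-argument, so $|y|$ is an eigenvector with no zero entries, and hence $y$ has no zero entries; if $z_1,z_2$ were linearly independent real eigenvectors, some real combination $z_1-tz_2$ would be a nonzero eigenvector with a vanishing component, a contradiction (a complex eigenvector reduces to this via real and imaginary parts). For the \emph{algebraic} multiplicity, apply everything already proved to $a^{T}$ (again nonnegative and irreducible, same spectral radius $\lambda$) to get a strictly positive left eigenvector $v>0$ with $v^{T}a=\lambda v^{T}$; a Jordan block of size $\geq 2$ would produce $u$ with $au=\lambda u+z$, $z>0$ the Perron eigenvector, and pairing with $v$ gives $\lambda v^{T}u=v^{T}au=\lambda v^{T}u+v^{T}z$, forcing $v^{T}z=0$, impossible. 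Hence $\lambda$ is simple with a strictly positive eigenvector, as claimed.

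The main obstacle is the attainment of the supremum defining $\lambda_0$: the functional $R$ is only upper semicontinuous on the nonnegative orthant --- it jumps wherever a coordinate vanishes --- so one cannot merely invoke compactness of the simplex. The passage through $Q=(I+a)^{N-1}$, whose strict positivity is exactly where irreducibility enters, simultaneously secures this attainment, yields strict positivity of the Perron eigenvector, and upgrades $a|y|\geq\lambda|y|$ to equality for every eigenvector of $\lambda$; once these are in hand the remaining steps are routine linear algebra.
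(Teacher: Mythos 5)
The paper does not actually prove this statement: Theorem \ref{thm:perronfrobeniuslinearalgebra} is quoted in Appendix A as classical machinery (the appendix only proves that $-h_{N}^{CW}$ is non-negative and irreducible, and then \emph{applies} the theorem), so there is no in-paper argument to compare yours against. Your proposal is a correct, self-contained proof along the standard Wielandt/Collatz--Wielandt lines: the variational quantity $\lambda_0=\sup_{x\geq 0,\,x\neq 0}\min_{i:x_i\neq 0}(ax)_i/x_i$, regularized through the strictly positive matrix $Q=(I+a)^{N-1}$, which secures attainment of the supremum at some $z>0$, forces $az=\lambda_0 z$, shows $\lambda_0=r(a)>0$, upgrades $a|y|\geq\lambda|y|$ to equality for every eigenvector of $\lambda$ (hence no vanishing components, hence geometric multiplicity one), and the pairing with the strictly positive left eigenvector of $a^{T}$ excludes a Jordan block of size two, giving simplicity. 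What this buys, relative to the paper, is precisely the part the paper delegates to the literature. The only step you assert without justification is the strict positivity of $Q=(I+a)^{N-1}$; it deserves one line: expand $(I+a)^{N-1}=\sum_{k=0}^{N-1}\binom{N-1}{k}a^{k}$ and note that irreducibility --- equivalently strong connectedness of the digraph, the paper's Lemma \ref{Lemma:irreducibilitystornglyconnected} --- provides, for $i\neq j$, a directed path from $i$ to $j$ whose length can be taken $\leq N-1$ (a shortest path visits distinct vertices), so some $k\leq N-1$ has $(a^{k})_{ij}>0$, while the $k=0$ term covers $i=j$. With that line added, the argument is complete and fully supports the statement as used in Appendix A.
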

This theorem is based on properties of a matrix relative to some basis, so that the Perron-Frobenius Theorem is valid if there exists a basis such that the matrix representation of the operator in this basis satisfies the assumptions of the theorem. Note that multiplying $-h_N^{CW}$ by $-1$, the eigenvalues will change sign and we find
instead that the smallest eigenvalue (i.e. the ground state) of $h_N^{CW}$
is simple and corresponds to a
strictly positive eigenvector. As a case in point, 
we are now going to prove a statement about our Hamiltonian $-h_{N}^{CW}$, relative to the standard basis of $\mathbb{C}^2$ extended to a basis of the tensor product $\bigotimes_{n=1}^{N}\mathbb{C}^2$ in the usual way. 
\begin{theorem}\label{thm:specificcurieweissproperties}
The Curie--Weiss Hamiltonian $-h_{N}^{CW}$ from $\eqref{eq:curieweiss}$, represented in the standard basis for $\bigotimes_{n=1}^{N}\mathbb{C}^2$, is non-negative and irreducible.
\end{theorem}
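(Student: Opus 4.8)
The plan is to write the matrix of $-h_{N}^{CW}$ in the standard tensor-product basis $\beta=\{e_{n_1}\ot\cdots\ot e_{n_N}\}$ of $\bigotimes_{n=1}^{N}\C^2$ explicitly, using the same splitting $-h_{N}^{CW}=\frac{1}{2N}\sum_{x,y}\sg_3(x)\sg_3(y)+B\sum_{x}\sg_1(x)$ already exploited in the proof of Theorem \ref{thm:tridiagonal}, and then to read off the two required properties directly from these entries. For definiteness I would take $B>0$; the case $B<0$ reduces to it by conjugating with the diagonal orthogonal matrix $\bigotimes_{x=1}^{N}\sg_3(x)$, which fixes $\sum_{x,y}\sg_3(x)\sg_3(y)$ and, since $\sg_3\sg_1\sg_3=-\sg_1$, sends $\sum_{x}\sg_1(x)$ to $-\sum_{x}\sg_1(x)$.

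For \emph{non-negativity}: as recalled in the proof of Theorem \ref{thm:tridiagonal}, each $\beta_i\in\beta$ with $n_+$ factors equal to $e_1$ is an eigenvector of $\sum_{x,y}\sg_3(x)\sg_3(y)$ with eigenvalue $(2n_+-N)^2$, so this operator is diagonal in $\beta$ and the diagonal entries of $-h_{N}^{CW}$ are the numbers $\frac{1}{2N}(2n_+-N)^2\ge 0$. Every off-diagonal entry comes from $B\sum_{x}\sg_1(x)$: since $\sg_1$ swaps $e_1$ and $e_2$, one has $\langle\beta_i|\sum_{x}\sg_1(x)|\beta_j\rangle=1$ when $\beta_i$ and $\beta_j$ differ in exactly one tensor slot and $=0$ otherwise, so each off-diagonal entry of $-h_{N}^{CW}$ equals $B>0$ or $0$. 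Hence all entries are non-negative.

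For \emph{irreducibility}: by Lemma \ref{Lemma:irreducibilitystornglyconnected} it suffices to show that the digraph $G(-h_{N}^{CW})$ is strongly connected, and since $-h_{N}^{CW}$ is a real symmetric matrix its edge relation is symmetric, so this amounts to connectedness of the underlying undirected graph. By the computation above, $\beta_i$ and $\beta_j$ are joined by an edge precisely when they differ in one of the $N$ tensor slots; identifying $\beta$ with $\{0,1\}^N$ (recording in each slot whether the factor is $e_1$ or $e_2$), this is exactly the edge set of the $N$-cube graph on its $2^N=\dim\bigl(\bigotimes_{n=1}^{N}\C^2\bigr)$ vertices, which is connected. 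Therefore $G(-h_{N}^{CW})$ is strongly connected, so $-h_{N}^{CW}$ is irreducible. Together with Theorem \ref{thm:perronfrobeniuslinearalgebra} (and the sign flip described after it) this yields that the ground state of $h_{N}^{CW}$ is simple, and strictly positive in the standard basis when $B>0$.

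The argument is essentially a direct inspection of the entries, so no step is a serious obstacle; the only point requiring a little care is the bookkeeping of the sign of $B$ — and, relatedly, that strict positivity of the ground state in the standard basis holds literally only for $B>0$, while for $B<0$ one gets positivity up to the fixed sign pattern of $\bigotimes_{x}\sg_3(x)$, which is all that is needed for the absence of symmetry breaking at finite $N$.
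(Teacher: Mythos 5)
Your proposal is correct and follows essentially the same route as the paper: non-negativity from the diagonal $\frac{1}{2N}(2n_+-N)^2$ contribution of the $\sg_3\sg_3$ term plus the entries $B$ or $0$ of $B\sum_x\sg_1(x)$, and irreducibility via Lemma \ref{Lemma:irreducibilitystornglyconnected} because single spin flips connect all standard basis vectors (your hypercube-graph phrasing is just a more explicit version of the paper's ``flips the spins one by one''). Your explicit treatment of the sign of $B$ by conjugating with $\bigotimes_x\sg_3(x)$ is a welcome clarification of a point the paper leaves implicit in the phrase ``all constant factors are strictly positive''.
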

\begin{proof}
Since all constant factors in $-h_{N}^{CW}$ are strictly positive, we only have to consider both terms containing sums. We show that 
\begin{align}
     \sum_{x,y\in\Lambda_N}\sigma_3(x)\sigma_3(y)\ \ \text{and}\ \sum_{x\in\Lambda_N}\sigma_1(x) \label{eq:curieweisswithoutfactors}
\end{align}
are non-negative. We have seen in the proof of Theorem \ref{thm:tridiagonal} that the operator $\sum_{x,y\in\Lambda_N}\sigma_3(x)\sigma_3(y)$ is a diagonal matrix with respect to the standard basis $\{e_{n_1}\otimes...\otimes e_{n_N}\}_{n_1=1,...,n_N=1}^{2}$ for $\bigotimes_{n=1}^{N}\mathbb{C}^2$.
For non-negativity we must prove, independently of the basis vectors, that there are at least as many plus signs as there are minus signs, i.e., we have to show that
\begin{align}
    N^2-2n_+(N-n_+)\geq2n_+(N-n_+).
\end{align}
This gives $N^2-4n_+(N-n_+)\geq 0$ if and only if $N^2-4Nn_+ + 4n_+^2\geq0$. The parabola 
\begin{equation}
n_+\mapsto N^2-4n_+N+4n_+^2 
\end{equation}
attains its minimum in $n_+=N/2$, which is given by $N^2-4\frac{N}{2}n+4(\frac{N}{2})^2=0$. So indeed, there are at least as many plus signs as minus signs, so that the corresponding diagonal term is non-negative.
The other term $\sum_{x\in\Lambda_N}\sigma_1(x)$ does not contain any negative entries at all, so if we apply this to any basis vector $\{e_{n_1}\otimes...\otimes e_{n_N}\}$, we get a non-negative matrix.
It follows that both operators in \eqref{eq:curieweisswithoutfactors} are non-negative in the basis under consideration. 

Now we show that the matrix corresponding to the Curie--Weiss Hamiltonian is irreducible. Note that irreducibility of a matrix does not depend on the basis in which the operator is represented, since similar matrices define equivalent representations which preserve irreducibility.
We use Lemma \ref{Lemma:irreducibilitystornglyconnected} to show that there is a direct path between any two vertices. But this is obvious: the operator $\sum_{x}\sigma_1(x)$ flips the spins one by one, and therefore the associated digraph is clearly strongly connected as we can find a directed path between any two vertices.\footnote{A different proof is given in van de Ven (2018), \S 5.3, p.78.}
\end{proof}

By the Perron-Frobenius Theorem, the largest eigenvalue of the Curie--Weiss Hamiltonian $-h_{N}^{CW}$ is positive, simple and corresponds to a strictly positive eigenvector. This in turn implies that that the ground state eigenvalue of $h_{N}^{CW}$ is positive, simple, and has a strictly positive eigenvector.
\section{Discretization}\setcounter{equation}{0}
\label{sec:Discretization}
This information provided in this appendix is based on Kuzmin (2017), Kajishima \& Taira (2017),  Groenenboom (1990), and Sundqvist (1970). These results are used above in \cref{sec:Locally uniform discretization}.
Recall from calculus that the following approximations are valid for the derivative of single-variable functions $f(x)$.
The first one is called the \emph{forward difference approximation} and is an expression of the form
\begin{align}
    f'(x)=\frac{f(x+h)-f(x)}{h}+O(h) \ \ \ (h>0).\label{forwarddifference}
\end{align}
The \emph{backward difference approximation} is of the form
\begin{align}
    f'(x)=\frac{f(x)-f(x-h)}{h}+O(h) \ \ \ (h>0).
\end{align}
Furthermore, the \emph{central difference approximation} is 
\begin{align}
    f'(x)=\frac{f(x+h)-f(x-h)}{2h}+O(h^2) \ \ \ (h>0).
\end{align}
The approximations are obtained by neglecting the error terms indicated by the O-notation. \\
These formulas can be derived from a Taylor series expansion around $x$,
\begin{align}
    f(x+h)=f(x)+hf'(x)+\frac{h^2}{2}f''(x)+...=\sum_{n=0}^{\infty}\frac{h^n}{n!}f^{(n)}(x), \label{eq16}
\end{align}
and
\begin{align}
    f(x+h)=f(x)-hf'(x)+\frac{h^2}{2}f''(x)+...=\sum_{n=0}^{\infty}(-1)^n\frac{h^n}{n!}f^{(n)}(x), \label{eq17}
\end{align}
where $f^{(n)}$ is the $n^{\text{th}}$ order derivative of $f$. Subtracting $f(x)$ from both sides of the above two equations and dividing by $h$ respectively $-h$ leads to he forward difference respectively the backward difference. The central difference is obtained by subtracting equation $\eqref{eq17}$ from equation $\eqref{eq16}$ and then dividing by $2h$. \\
The question is how small $h$ has to be in order for the algebraic difference $\frac{f(x+h)-f(x)}{h}$ (i.e. in this case the forward difference approximation) to be good approximation of the derivative.
It is clear from the above formulas that the error for the central difference formula is $O(h^2)$. Thus, central differences are significantly better than forward and backward differences.
\\
Higher order derivatives can be approximated using the Taylor series about the value $x$
\begin{align}
    f(x+2h)=\sum_{n=0}^{\infty}\frac{(2h)^n}{n!}f^{(n)}(x)
\end{align}
and
\begin{align}
    f(x-2h)=\sum_{n=0}^{\infty}(-1)^n\frac{(2h)^n}{n!}f^{(n)}(x).
\end{align}
A forward difference approximation to $f''(x)$ is then
\begin{align}
    \frac{f(x+2h)-2f(x+h)+f(x)}{h^2}+O(h),
\end{align}
and a centered difference approximation is for example
\begin{align}
    \frac{f(x+h)-2f(x)+f(x+h)}{h^2}+O(h^2).
\end{align}
Now we discretize the kinetic and potential energy operator. For simplicity, consider the one-dimensional case. We first discretize the interval $[0,1]$ using a uniform grid of $N$ points $x_i=ih, h=\frac{1}{N}, i=0,1,...,N$. It follows that $f(x)\mapsto f(x_i)=:f_i$. The Taylor series expansion of a function about a point $x_i$ becomes
\begin{align}
    f_{i+k}=f_i+\sum_{n=0}^{\infty}(-1)^n\frac{(kh)^n}{n!}f^{(n)}(x),
\end{align}
where $k=\pm 1,\pm 2,...,\pm N$. Similar as above, we can find central difference formulas for $f'_j, \ f''_j$, namely
\begin{align}
   & f'_j=\frac{-f_{j-1}+f_{j+1}}{2h}+O(h^2)\\
   & f''_j=\frac{f_{j-1}-2f_j+f_{j+1}}{h^2}+O(h^2).\label{eq:uniformsecondorderderivative}
\end{align}
The approximations are again obtained by neglecting the error terms.\\
Using this uniform grid with grid spacing $h=1/N$, it follows that the second derivative operator in one dimension is given by the tridiagonal matrix $\frac{1}{h^2}[\cdot\cdot\cdot 1\ {-2} \ 1 \cdot \cdot\cdot]_N$ and the potential which acts as multiplication, is given by a diagonal matrix. With the notation $\frac{1}{h^2}[\cdot\cdot\cdot 1 \ {-2} \ 1 \cdot\cdot\cdot]_N$, we mean the $N$-dimensional matrix\\
\[
   \frac{1}{h^2} \left(
    \begin{array}{ccccc}
    -2   & 1                                \\
      1 & -2 & 1     &  \ \ \text{\huge0}\\
      &        \ddots       & \ddots & \ddots               \\
      & \text{\huge0}  \ \  &  1 & -2 & 1           \\
      &               &   &   1 & -2
    \end{array}
    \right).
\]
Now suppose that the values of the kinetic energy operator $T$ are non-uniformly dependent of the positions in space. Then one needs to use a non-uniform grid in order to get a good description of the second derivative. We use the central difference approximation and approach $f$ by a Taylor series. \\
Denote $x_j$ by the $j^{th}$ grid point and $f_k=f(x_k)$. Then the Taylor series of $f$ at $x_j$ can be written as
\begin{align}
    f_k=f_j+\sum_{m=1}^{\infty}\frac{(x_k-x_j)^m}{m!}f_{j}^{(m)}.
\end{align}
If we let $h_{j}=x_{j+1}-x_{j}$, then similarly as above, for a three-point finite-difference formula i.e., only $f_{i+1}, f_i, f_{i-1}$ are used, we find that
\begin{align}
    f_{j+1}=f_j+h_{j}f_{j}'+\frac{h_{j}^2}{2}f_{j}''+\frac{h_{j}^3}{6}f_{j}^{(3)}+...
\end{align}
and similarly one can write $x_{j-1}=x_{j}-h_{j-1}$, so that we find
\begin{align}
    f_{j-1}=f_j-h_{j-1}f_{j}'+\frac{h_{j-1}^2}{2}f_{j}''-\frac{h_{j-1}^3}{6}f_{j}^{(3)}+...
\end{align}
Both expressions can be used to eliminate $f_{j}'$ to derive an expression for the second derivative:
\begin{align}
    f_{j}''=\frac{2f_{j-1}}{h_{j-1}(h_{j-1}+h_{j})}-\frac{2f_j}{h_{j-1}h_{j}}+\frac{2f_{j+1}}{h_{j}(h_{j-1}+h_{j})}+\frac{h_{j}-h_{j-1}}{3}f_{j}^{(3)}+\mathcal{O}(h^2).\label{secondorderderivativenonuniformgrid}
\end{align}
This is the central difference approximation for the non-uniform grid.
If we assume that $h_{j}-h_{j-1}$ is small, we may neglect the last term, and we get precisely that 
\begin{align}
    \frac{2}{h_{j-1}(h_{j-1}+h_{j})}=T_{j,j-1},\label{eqoff1}\\
    \frac{-2}{h_{j-1}h_{j}}=T_{j,j},\label{eqdiag}\\
    \frac{2}{h_{j}(h_{j-1}+h_{j})}=T_{j,j+1} \label{eqoff2}  .
\end{align}
Therefore we find that the ratio, say $\rho_j$, equals
\begin{align}
    \rho_{j}=\frac{T_{j,j+1}}{T_{j,j-1}}=\frac{h_{j-1}}{h_{j}}. \label{eq81}
\end{align}
Thus
\begin{equation}
 h_{j-1}=\rho_{j}h_{j}.
\end{equation}
 We derive from this combined with the above three equations that 
\begin{align}
    h_{j}^2=\frac{2}{T_{j,j-1}\rho_j(1+\rho_j)},\label{eq82}\\    
    \text{or} \ \ \ h_{j}^2=\frac{2}{T_{j,j+1}(1+\rho_j)}. \label{eq:distance}
\end{align}
\section*{Acknowledgements}\setcounter{equation}{0}
Chris van de Ven is  Marie Sk\l odowska-Curie fellow of the  Istituto Nazionale di Alta Matematica 
 and is funded by the INdaM Doctoral Programme in Mathematics and/or Applications co-funded 
by  Marie Sk\l odowska-Curie Actions,  INdAM-DP-COFUND-2015, grant number 713485. Robin Reuvers  is supported by the Royal Society through a Newton International Fellowship, and by Darwin College (Cambridge) through a Schlumberger Research Fellowship. 
The  authors would like to thank both the referee (Jasper van Wezel) and Valter Moretti for their feedback, which has led to substantial improvements. We are also indebted to Aron Beekman for  references on \ssb\ in finite quantum systems. 

\nolinenumbers
\end{document}